\newcommand{\bra}[1]{\left\langle{#1}\right\vert}
\newcommand{\ket}[1]{\left\vert{#1}\right\rangle}
\newcommand{\qw}[1][-1]{\ar @{-} [0,#1]}
\newcommand{\multigate}[2]{*+<1em,.9em>{\hphantom{#2}} \qw \POS[0,0].[#1,0];p !C *{#2},p \save+LU;+RU **\dir{-}\restore\save+RU;+RD **\dir{-}\restore\save+RD;+LD **\dir{-}\restore\save+LD;+LU **\dir{-}\restore}
\newcommand{\ghost}[1]{*+<1em,.9em>{\hphantom{#1}} \qw}
\newcommand{\rstick}[1]{*!L!<-.5em,0em>=<0em>{#1}}
\newcommand{\lstick}[1]{*!R!<.5em,0em>=<0em>{#1}}
\newcommand{\Qcircuit}{\xymatrix @*=<0em>}
\newtheorem{definition}{Definition}[section]
\newtheorem{proposition}[definition]{Proposition}
\newtheorem{lemma}[definition]{Lemma}
\newtheorem{theorem}{Theorem}[section]
\newtheorem{corollary}[definition]{Corollary}
\newcommand{\nc}{\newcommand}
\nc{\rnc}{\renewcommand}
\nc{\beq}{\begin{equation}}
\nc{\eeq}{{\end{equation}}}
\nc{\beqa}{\begin{eqnarray}}
\nc{\eeqa}{\end{eqnarray}} \nc{\lbar}[1]{\overline{#1}}
\nc{\ketbra}[2]{|#1\rangle\!\langle#2|}
\nc{\proj}[1]{|#1\rangle\!\langle #1 |}
\nc{\avg}[1]{\langle#1\rangle}
\rnc{\max}{\operatorname{max}}
\nc{\rank}{\operatorname{rank}}
\nc{\conv}{\operatorname{conv}}
\nc{\smfrac}[2]{\mbox{$\frac{#1}{#2}$}}
\nc{\Tr}{\operatorname{Tr}}
\nc{\ox}{\otimes}
\nc{\dg}{\dagger}
\nc{\dn}{\downarrow} \nc{\cA}{{\cal A}} \nc{\cB}{{\cal B}}
\nc{\cC}{{\cal C}} \nc{\cD}{{\cal D}} \nc{\cE}{{\cal E}}
\nc{\cF}{{\cal F}} \nc{\cG}{{\cal G}} \nc{\cH}{{\cal H}}
\nc{\cI}{{\cal I}} \nc{\cJ}{{\cal J}} \nc{\cK}{{\cal K}}
\nc{\cL}{{\cal L}} \nc{\cM}{{\cal M}} \nc{\cN}{{\cal N}}
\nc{\cO}{{\cal O}} \nc{\cP}{{\cal P}} \nc{\cR}{{\cal R}}
\nc{\cS}{{\cal S}} \nc{\cT}{{\cal T}} \nc{\cU}{{\cal U}}
\nc{\cX}{{\cal X}} \nc{\cW}{{\cal W}} \nc{\cZ}{{\cal Z}}
\nc{\csupp}{{\operatorname{csupp}}}
\nc{\qsupp}{{\operatorname{qsupp}}}
\nc{\rar}{\rightarrow} \nc{\lrar}{\longrightarrow}
\nc{\poly}{\operatorname{poly}}
\nc{\polylog}{\operatorname{polylog}}
\nc{\Lip}{\operatorname{Lip}} 
\def\>{\rangle}
\def\<{\langle}
\def\ph{\varphi}
\def\ph{\varphi}
\nc{\glneq}{{\raisebox{0.6ex}{$>$}  \hspace*{-1.8ex} \raisebox{-0.6ex}{$<$}}}
\nc{\gleq}{{\raisebox{0.6ex}{$\geq$}\hspace*{-1.8ex} \raisebox{-0.6ex}{$\leq$}}}
\nc{\RR}{{{\mathbb R}}}
\nc{\CC}{{{\mathbb C}}}
\nc{\FF}{{{\mathbb F}}}
\nc{\HH}{{{\mathbb H}}}
\nc{\NN}{{{\mathbb N}}}
\nc{\ZZ}{{{\mathbb Z}}}
\nc{\PP}{{{\mathbb P}}}
\nc{\QQ}{{{\mathbb Q}}}
\nc{\UU}{{{\mathbb U}}}
\nc{\WW}{{{\mathbb W}}}
\rnc{\SS}{{{\mathbb S}}}
\nc{\id}{{\operatorname{id}}}
\nc{\vholder}[1]{\rule{0pt}{#1}}
\nc{\ob}[1]{#1}
\def\beq{\begin{equation}}
\def\eeq{\end{equation}}
\nc{\eq}[1]{Eq.~(\ref{eq:#1})} \nc{\eqs}[2]{Eqs.~(\ref{eq:#1}) and
(\ref{eq:#2})}
\nc{\eqn}[1]{Eq.~(\ref{eqn:#1})}
\nc{\eqns}[2]{Eqs.~(\ref{eqn:#1}) and (\ref{eqn:#2})}
\nc{\region}{\cS\cW}
\def \Tr{\textup{Tr}}
\def \calH{ \ensuremath{ \cal H } }
\renewcommand{\ket}[1]{\left\vert #1 \right\rangle}
\renewcommand{\bra}[1]{\left\langle #1 \right\vert}
\newcommand{\samekb}[1]{\ket{#1}\bra{#1}}
\newcommand{\K}{  \ensuremath{\mathcal K} }
\newcommand{\Q}{  \ensuremath{\mathcal Q} }
\newcommand{\Pp}{  \ensuremath{\mathcal P} }
\newcommand{\Ww}{  \ensuremath{\mathcal W} }
\newcommand{\qq}{  \ensuremath{\vec{q}} }
\newcommand{\Kbar}{{  \ensuremath{ \bar{ {\mathcal K} } }   }}
\newcommand{\Esq}{\ensuremath{E_\text{sq}} }
\newcommand{\rhot}{{\ensuremath{\tilde{\rho}}}}
\newcommand{\sigmat}{{\ensuremath{\tilde{\sigma}}}}
\newcommand{\be}{\begin{equation}}
\newcommand{\ee}{\end{equation}}
\newcommand{\bea}{\begin{eqnarray}}
\newcommand{\eea}{\end{eqnarray}}
\newcommand{\bs}{\begin{split}}
\newcommand{\es}{\end{split}}
\renewcommand{\neg}{\mbox{-}}
\newcommand{\XX}{X_1;\ldots;X_m}
\newcommand{\XcX}{X_1,\ldots,X_m}
\newcommand{\E}{   {\cal E} }
\newcommand{\beas}{\begin{eqnarray*}}
\newcommand{\eeas}{\end{eqnarray*}}
\newcommand{\myparagraph}[1]{\noindent{\bf #1}$\quad$}
\begin{document}

    \title{{\Large Distributed Compression and Multiparty Squashed Entanglement}}


    \author{David Avis}
     \email{avis@cs.mcgill.ca}

    \author{Patrick Hayden}
     \email{patrick@cs.mcgill.ca}
     \affiliation{
        School of Computer Science,
        McGill University,
        Montreal, Quebec, H3A 2A7, Canada
        }

    \author{Ivan Savov}
     \email{ivan.savov@mail.mcgill.ca}
     \affiliation{
        Physics Department,
        McGill University,
        Montreal, Quebec, H3A 2A7, Canada
        }

    \begin{abstract}
		We study a protocol in which many parties use quantum communication to transfer
		a shared state to a receiver without communicating with each other. 
		This protocol is a multiparty version of the fully quantum Slepian-Wolf protocol 
		for two senders and arises through the repeated application of the two-sender protocol.
		We describe bounds on the achievable rate region for the distributed compression problem.
		The inner bound arises by expressing the achievable rate region for our protocol
		in terms of its vertices and extreme rays and, equivalently, in terms of facet inequalities. 
		We also prove an outer bound on all possible rates for distributed compression
		based on the multiparty squashed entanglement, a measure of multiparty entanglement.
    \end{abstract}
    

    \maketitle



\section{Introduction}

	Quantum information theory studies the interconversion
	of information resources like quantum channels, states and entanglement
	for the purpose of accomplishing communication tasks
    \cite{BBPS,DHW04,DHW05b,FQSW}.
    This approach is rendered possible by the substantial body of results characterizing quantum channels
    \cite{H98,SW97,BSST99,D03} and quantum communication resources like entanglement \cite{B96,DW05,PV07}.

    In classical information theory, distributed compression is the search for the optimal rates 
    at which two parties Alice and Bob can compress and transmit information faithfully to a third party Charlie.
    If the senders are allowed to communicate among themselves then they can obviously use the
    correlations between their sources to achieve better rates.
    The more interesting problem is to ask what rates can be achieved if no communication is allowed between
    the senders.
    The classical version of this problem was solved by Slepian and Wolf \cite{SW73}.
    The quantum version of this problem was first approached in \cite{ADHW04,HOW05} and more recently in \cite{FQSW},
    which describes the Fully Quantum Slepian-Wolf (FQSW) protocol and partially solves the distributed compression
    problem for two senders.

    In this paper we generalize the results of the FQSW protocol to a multiparty scenario where $m$ senders,
    Alice 1 through Alice $m$, send quantum information to a single receiver, Charlie. We exhibit a set of
    achievable rates as well as an outer bound on the possible rates
    based on a new measure of multiparty entanglement that generalizes squashed entanglement \cite{CW04}.
    Our protocol is optimal for input states that have zero squashed entanglement, notably separable states.

    The multiparty squashed entanglement is interesting in its own right, and we develop a number of its
    properties in the paper. (It was also found independently by Yang et al. and described in a recent
    paper \cite{multisquash}.)
    While there exist several measures for bipartite entanglement with useful properties and applications
    \cite{BBPS, HHT, Ra99, VP98},
    the theory of multiparty entanglement, despite considerable effort \cite{LSSW,DCT99,CKW00,BPRST99},
    remains comparatively undeveloped.
    Multiparty entanglement is fundamentally more complicated because it cannot be described by a single number
    even for pure states.
    We can, however, define \emph{useful} entanglement measures for particular applications, and
    the multiparty squashed entanglement seems well-suited to application in the distributed compression
    problem.

    The structure of the paper is as follows.
    In section \ref{sec:distributed-compression} we describe the quantum distributed compression problem and present our
    protocol. Our results are twofold. In Theorem~\ref{thm:THM1} we give the formula for the achievable rate region
    using this protocol and in  Theorem \ref{thm:THM2} we provide a bound on the best possible rates for any protocol.
    The proof of Theorem~\ref{thm:THM1} is in section \ref{sec:THMIproof}.
    The proof of Theorem~\ref{thm:THM2} is given in section \ref{sec:THMIIproof} but before we get to it we need to
    introduce and describe the properties of the multiparty information quantity in section
    \ref{sec:multiparty-information} and multiparty squashed entanglement in section \ref{sec:squashed-entanglement}.



    \myparagraph{Notation:}     
    We will denote quantum systems as $A,B,R$ and the corresponding Hilbert spaces $\cH^A, \cH^B, \cH^R$
    with respective dimensions $d_A,d_B,d_R$.
    We denote pure states of the system $A$ by a \emph{ket} $\ket{\ph}^A$ and the corresponding density matrices as
    $\ph^A = \proj{\ph}^A$.
    We denote by $H(A)_\rho=-\Tr\left(\rho^A\log\rho^A\right)$ the von Neumann entropy of the state $\rho^A$.
    For a bipartite state $\sigma^{AB}$ we define the conditional entropy $H(A|B)_\sigma=H(AB)_\sigma-H(B)_\sigma$
    and the mutual information $I(A;B)_\sigma=H(A)_\sigma+H(B)_\sigma-H(AB)_\sigma$.
    The trace distance between states $\sigma$ and $\rho$ is
    $\|\sigma-\rho\|_1 = \mathrm{Tr}|\sigma - \rho|$ where  $|X| = \sqrt{X^{\dagger}X}$.
    The fidelity is defined to be $F(\sigma, \rho) = \mathrm{Tr}\left(\sqrt{\sqrt{\rho}\sigma\sqrt{\rho}}\right)^2 $.
    Two states that are very similar have fidelity close to 1 whereas states with little similarity will have low
    fidelity.
    Throughout this paper, logarithms and exponents are taken base two unless otherwise specified.


\section{Multiparty Distributed Compression} \label{sec:distributed-compression}

    Distributed compression of classical information involves many parties collaboratively encoding their sources
    $X_1,X_2\cdots X_m$ and sending the information to a common receiver \cite{C75}.
    In the quantum setting, the parties are given a quantum state $\ph^{A_1A_2\cdots A_m} \in \cH^{A_1A_2\cdots A_m}$
    and are asked to individually compress their shares of the state and transfer them
    to the receiver while sending as few qubits as possible \cite{ADHW04}. The objective is to successfully
    transmit the quantum information stored in the $A$ systems, meaning any entanglement with an external reference
    system, to the receiver.
    No communication between the senders is allowed and, unlike \cite{HOW05}, in this paper there is no classical
    communication between the senders and the receiver.

    In our analysis, we work in the case where we have many copies of the input state, so that
    the goal is to send shares of
    the purification $\ket{\psi}^{A_1A_2\cdots\!A_mR}=(\ket{\ph}^{A_1A_2\cdots\!A_mR})^{\otimes n}$,
    where the $A_i$'s denote the $m$ different systems and $R$ denotes the reference system, which does not participate
    in the protocol.
    Notice that we use $A_i$ to denote both the individual system associated with state $\ph$ as well the $n$-copy
    version associated with $\psi$; the meaning should be clear from the context.
    We also use the shorthand notation $A=A_1A_2\cdots\!A_m$
    to denote all the senders.

    The objective, as we have mentioned, is for the participants
    to transfer their $R$-entanglement to a third party Charlie as illustrated in Figure \ref{fig:mpFQSWdiagram}.
    Note that any other type of correlation the $A$ systems could have with an external subsystem
    is automatically preserved in this case, which implies for example that if $\ph$ were written
    as a convex combination $\ph = \sum_i p_i \ph_i$ then a successful protocol would automatically
    send the $\ph_i$ with high fidelity on average~\cite{EntFid}.

    \begin{figure}[ht] \begin{center}
        \input{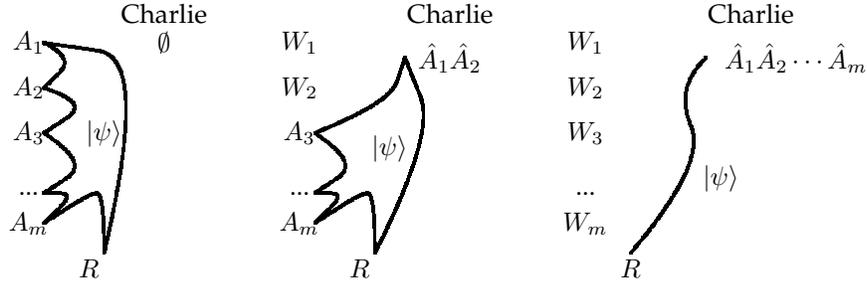}
        \caption{
            Pictorial representation of the quantum correlations between the systems at three stages of the protocol.
            Originally the state $\ket{\psi}$ is shared between $A_1A_2\cdots A_m$ and $R$.
            The middle picture shows the protocol in progress.
            Finally, all systems are received by Charlie and $\ket{\psi}$ is now shared between
            Charlie's systems $\widehat{A}_1\widehat{A}_2 \cdots \widehat{A}_m$ and $R$.            }
        \label{fig:mpFQSWdiagram}
    \end{center}
    \end{figure}

    An equivalent way of thinking about quantum distributed compression is to say that the participants
    are attempting to decouple their systems from the reference $R$ solely by sending quantum information to Charlie.
    Indeed, if we assume that originally $R$ is the purification of $A_1A_2\cdots A_m$, and
    at the end of the protocol there are no correlations between the remnant $W$ systems (see Figure
    \ref{fig:mpFQSWdiagram}) and $R$, then the purification of $R$ must have been transferred to 
    Charlie's laboratory since none of the original information was discarded.

    To perform the distributed compression task, each of the senders independently encodes her share
    before sending part of it to Charlie.
    The encoding operations are modeled by quantum operations, that is, completely positive trace-preserving (CPTP)
    maps $E_i$ with outputs $C_i$ of dimension $2^{nQ_i}$.
    Once Charlie receives the systems that were sent to him, he will apply a decoding CPTP map $D$ with output system
    $\widehat{A}=\widehat{A}_1\widehat{A}_2 \ldots \widehat{A}_m$ isomorphic to the original $A=A_1A_2\ldots A_m$.

    \begin{definition}[The rate region] \label{achievable}
    We say that a rate tuple $\vec{Q} = (Q_1,Q_2,\ldots,Q_m)$ is achievable if for all $\epsilon > 0$ there exists
    $N(\epsilon)$ such that for all $n\geq N(\epsilon)$ there exist $n$-dependent maps $(E_1,E_2,\ldots,E_m,D)$
    with domains and ranges as in the previous paragraph for which the fidelity
    between the original state, $\ket{\psi}^{A^nR^n}=\left(\ket{\ph}^{A_1A_2\cdots\!A_mR}\right)^{\otimes n}$,
    and the final state,   ${\sigma}^{\widehat{A}_1\widehat{A}_2 \ldots \widehat{A}_mR}={\sigma}^{\widehat{A}^nR^n}$,
    satisfies
    \begin{equation} \label{rate-region}
        F\left( \ket{\psi}^{A^nR^n}\!,\ {\sigma}^{\widehat{A}^nR^n} \right)=
        \phantom{.}^{\widehat{A}^nR^n}\!\!\bra{\psi}
            (D \circ (E_1 \ox \cdots \ox E_{m}))(\psi^{A^nR^n})
            \ket{\psi}^{\widehat{A}^nR^n} \geq 1 - \epsilon.
    \end{equation}
    We call the closure of the set of achievable rate tuples the rate region.
    \end{definition}

    At this point it is illustrative to review the results of the two-party state transfer protocol \cite{FQSW},
    which form a key building block for the multiparty distributed compression protocol presented in
    section \ref{subsec:protocol}.


\subsection{The FQSW protocol} \label{subsec:FQSW-protocol}
    The fully quantum Slepian-Wolf protocol \cite{FQSW} describes a procedure for simultaneous quantum state transfer
    and  entanglement distillation. This communication task can be used as a building block for nearly all the other
    protocols of quantum information theory \cite{DHW04}, yet despite its powerful applications it is fairly simple
    to implement.

    Consider a setup where the state $\ket{\psi}^{ABR}=\left(\ket{\ph}^{ABR}\right)^{\otimes n}$ is shared between
    Alice, Bob and a reference system $R$.
    The FQSW protocol describes a procedure for Alice to transfer her $R$-entanglement to Bob while
    at the same time generating ebits with him.
    Alice can accomplish this by encoding and sending part of her system, denoted $A_1$, to Bob.
    The state after the protocol can approximately be written as
    $\ket{\Phi}^{A_2\widetilde{B}}(\ket{\ph}^{R\widehat{B}})^{\otimes n}$,
    where the systems $\widetilde{B}$ and $\widehat{B}$ are held in Bob's lab while $A_2$ remains with Alice.
    The state $\ket{\Phi}^{A_2\widetilde{B}}$ is a maximally entangled state shared between Alice and Bob,
    a handy side-product which can be used to build more advanced protocols \cite{DH06,DY06}.
    Figure \ref{fig:FQSW} illustrates the entanglement structure before and after the protocol.
    \begin{figure}[b]   \begin{center}
        \input{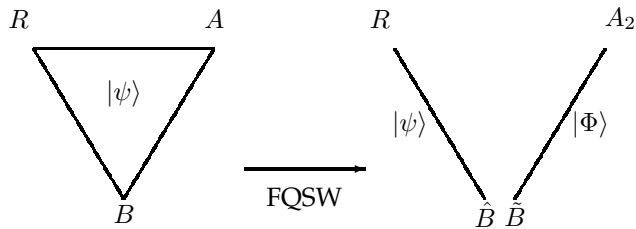}             \end{center}
        \caption{ \footnotesize
            Diagram representing the $ABR$ correlations before and after the FQSW protocol.
            Alice manages to decouple completely from the reference $R$.
            The $\widehat{B}$ system is isomorphic to $AB$. }
        \label{fig:FQSW}
    \end{figure}

    The protocol, represented graphically in Figure \ref{fig:AliceActions}, consists of the following steps:
    \begin{enumerate}
        \item   Alice performs Schumacher compression on her system $A$ to obtain the output system $A^S$.

        \item   Alice then applies a random unitary $U_A$ to $A^S$.

        \item   Next, she splits her system into two parts: $A_1A_2=A^S$ with $d_{A_1} = 2^{nQ_A}$ and
                \be
                    Q_A  >  \frac{1}{2}I(A;R)_\ph.
                \ee
                She sends the system $A_1$ to Bob.

        \item   Bob, in turn, performs a decoding operation $V_B^{ {A_1B} \to \widehat{B}\widetilde{B} }$ which splits
                his system into a $\widehat{B}$ part purifying $R$ and a $\widetilde{B}$ part which is fully entangled
                with Alice.
    \end{enumerate}

    \begin{figure}[ht]  \begin{center}
        \input{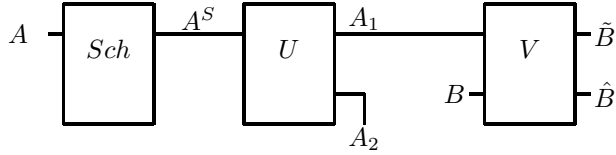} \end{center}
        \caption{
            A circuit diagram that shows the Schumacher compression and unitary encoding done by Alice and
            the decoding done by Bob.   }
        \label{fig:AliceActions}
    \end{figure}

    The best way to understand the mechanism behind this protocol is by thinking about destroying correlations.
    If, at the end of the protocol, Alice's system $A_2$ is nearly decoupled from the reference in the sense that
    $\sigma^{A_2 R} \approx \sigma^{A_2}\otimes \sigma^{R}$, then Alice must have succeeded in
    sending her $R$ entanglement to Bob because it is Bob alone who then holds the $R$ purification.
    We can therefore guess the lower bound on how many qubits Alice will have to send before she can
    decouple from the reference. Originally, Alice and R share $I(A;R)_\ph$ bits of information per copy of
    $\ket{\ph}^{ABR}$.   Since one qubit can carry away at most two bits of quantum mutual information, 
    this means that the minimum rate at which Alice must send qubits to Bob is
    \be
        Q_A     >  \frac{1}{2}I(A;R)_\ph. \label{eqn:FQSW}
    \ee

    It is shown in \cite{FQSW} that this rate is achievable in the limit of many copies of the state.
    Therefore the FQSW protocol is optimal for the state transfer task.

\subsection{The multiparty FQSW protocol} \label{subsec:protocol}
    Like the original FQSW protocol, the multiparty version relies on Schumacher compression and the mixing
    effect of random unitary operations for the encoding.
    The only additional ingredient is an agreed upon permutation of the participants.
    The temporal order in which the participants will perform their encoding is of no importance.
    However, the permutation determines how much information each participant is to send to Charlie. \\

    \noindent For each permutation $\pi$ of the participants, the protocol consists of the following steps:
    \begin{enumerate}
        \item   Each Alice-$i$ performs Schumacher compression on her system $A_i$ reducing its
                effective size to the entropy bound of roughly $H(A_i)$ qubits per copy of the state.
        \item   Each participant applies a known, pre-selected random unitary to the compressed system.
        \item   Participant $i$ sends to Charlie a system $C_i$ of dimension $2^{nQ_i}$ where
                \be
                    Q_i > \frac{1}{2}I(A_i;A_{\K_i}R)_\ph
                \ee
                where $\K_i = \{ \pi\!(j) : j>\pi^{\neg 1}(i) \}$ is the set of participants who come
                after participant $i$ according to the permutation.
        \item   Charlie applies a decoding operation $D$ consisting of the composition of the decoding maps
                $D_{\pi\!(m)} \circ \cdots \circ D_{\pi\!(2)} \circ D_{\pi\!(1)}$ defined by the individual FQSW steps
                in order to recover $\sigma^{\widehat{A}_1\widehat{A}_2 \ldots \widehat{A}_m}$
                nearly identical to the original $\psi^{A_1A_2\cdots\!A_m}$ and purifying $R$.
    \end{enumerate}


\subsection{Statement of Results} \label{subsec:statemenet-of-results}

    This subsection contains our two main theorems about multiparty distributed compression.
    In Theorem \ref{thm:THM1} we give the formula for the set of achievable rates using the multiparty FQSW protocol
    (sufficient conditions).
    Then, in Theorem \ref{thm:THM2} we specify another set of inequalities for the rates $Q_i$ which must be true
    for any distributed compression protocol (necessary conditions).
    In what follows, we consistently use $\K \subseteq \{ 1,2,\ldots m \}$ to denote any subset of the senders in
    the protocol.


    \begin{theorem}     \label{thm:THM1}
    Let $\ket{\ph}^{A_1A_2\cdots A_mR}$ be a pure state. If the inequality
        \be \label{inner-bound}
            \sum_{k\in \K} Q_k \geq  \frac{1}{2} \left[ \sum_{k\in \K}\!\left[H(A_k)_\ph\right]  +  H(R)_\ph -  H(RA_{\K})_\ph \right]
        \ee
    holds for all $\K \subseteq \{1,2,\ldots,m\}$, then the rate tuple $(Q_1,Q_2,\cdots,Q_m)$ is achievable
    for distributed compression of the $A_i$ systems.
    \end{theorem}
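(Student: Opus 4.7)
My plan is to combine the correctness of the multiparty FQSW protocol of Section~\ref{subsec:protocol} for each permutation of the senders with a polymatroid-theoretic analysis of the polytope cut out by~(\ref{inner-bound}). Time-sharing across permutations and the freedom to transmit additional qubits should then cover every point of the polytope.

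For the first ingredient, I would argue by induction on $m$ that for every permutation $\pi$ the protocol of Section~\ref{subsec:protocol} achieves the corner
\[
    Q_{\pi(i)}^\pi = \tfrac12 I(A_{\pi(i)}; A_{\K_i^\pi} R)_\ph + \delta
\]
for arbitrary $\delta > 0$ and sufficiently large $n$. In the inductive step, sender $\pi(1)$ plays ``Alice'' in a single two-party FQSW call whose reference is the combined system $A_{\K_1^\pi} R$; her encoding decouples her residual from $A_{\K_1^\pi} R$ at rate $\tfrac12 I(A_{\pi(1)}; A_{\K_1^\pi} R)_\ph$. Because the encoding is local to $A_{\pi(1)}$ and to what she transmits to Charlie, the marginal state on $A_{\K_1^\pi} R$ is preserved exactly, so the induction hypothesis applies to the remaining $m-1$ senders. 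A chaining argument, again exploiting locality, shows that after all $m$ steps the residuals are jointly in product with $R$, i.e.\ $\sigma^{D_1 \cdots D_m R} \approx \tau^{D_1} \ox \cdots \ox \tau^{D_m} \ox \rho^R$; Charlie's received systems $C_1 \cdots C_m$ then purify $R$, and Charlie decodes by a single unitary on his side.

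For the second ingredient, I would analyze the polytope $P$ defined by~(\ref{inner-bound}) using polymatroid theory. The set function
\[
    f(\K) = \tfrac{1}{2}\Bigl[\textstyle\sum_{k \in \K} H(A_k)_\ph + H(R)_\ph - H(RA_\K)_\ph\Bigr]
\]
satisfies $f(\emptyset)=0$, is monotone (its increments equal $\tfrac12 I(A_j; RA_\K)_\ph \geq 0$), and is supermodular: the required inequality $f(\K \cup \cL) + f(\K \cap \cL) \geq f(\K) + f(\cL)$ reduces, after cancellation of the $H(A_k)$ and $H(R)$ contributions, to strong subadditivity applied to $\rho^{RA_1\cdots A_m}$. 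Hence $P$ is a contra-polymatroid, its vertices are the $m!$ greedy corners $Q^{V,\sigma}_{\sigma(i)} = f(\{\sigma(1),\ldots,\sigma(i)\}) - f(\{\sigma(1),\ldots,\sigma(i-1)\})$, and $P = \conv\{\vec Q^{V,\sigma}\} + \RR_+^m$. A short calculation that repeatedly invokes purity of $\ket{\ph}^{A_1 \cdots A_m R}$ matches the greedy corner for $\sigma$ to the protocol rate tuple for the reverse permutation $\bar\sigma$, so every vertex of $P$ is achievable; time-sharing and upward-closure of the achievable region then deliver all of $P$.

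The main obstacle I anticipate is the inductive execution of the first ingredient, and in particular the verification that the decoupling established at each step survives the later encodings. Locality of each party's operation is the key point: because Alice-$\pi(i)$ acts only on her own system and on her transmission to Charlie, the marginal state on the union of earlier residuals, unprocessed systems, and $R$ is undisturbed at step $i$, so the approximate product structures produced at successive steps compose into the joint product on $(D_1 \cdots D_m) \ox R$ needed for Charlie's final decoding.
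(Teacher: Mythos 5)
Your proposal follows essentially the same route as the paper: achievability of the permutation corner points via sequential two-party FQSW (with the reference at each stage being $R$ together with the not-yet-transmitted systems), followed by the identification of the polytope cut out by the inequalities with the convex hull of those corners plus the nonnegative orthant, using supermodularity of the constants $C_\K$ derived from strong subadditivity. The only difference is that you invoke the standard contra-polymatroid/greedy-vertex theorem of Edmonds where the paper gives a self-contained elementary proof of that equivalence via a topological-sort argument on the saturated constraints; the paper itself notes that Edmonds's result covers exactly this step.
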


    Because Theorem \ref{thm:THM1} expresses a set of sufficient conditions for the protocol to succeed, we say
    that these rates are contained in the rate region. The proof is given in the next section.

    In the $m$-dimensional space of rate tuples $(Q_1,Q_2,\cdots,Q_m) \in \RR^m$, the inequalities
    \eqref{inner-bound} define a convex polyhedron \cite{poly} whose facets are given by the corresponding
    hyperplanes, as illustrated in Figure~\ref{fig:graph3d}.

    \begin{figure}[ht]
        \begin{center}      \includegraphics[width=3.4in]{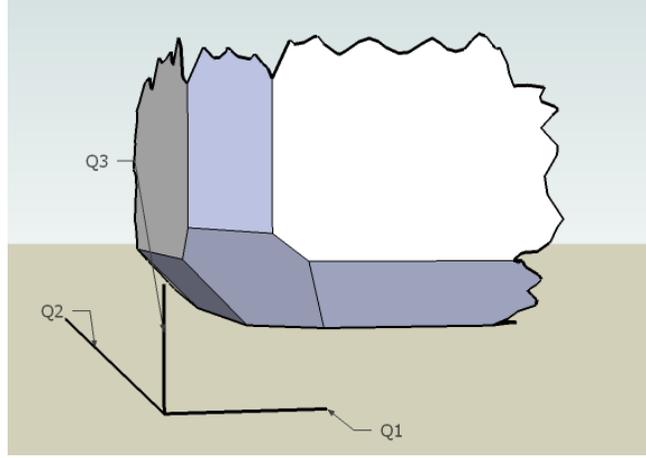}     \end{center}
        \caption{ Sketch of the rate region for the multiparty FQSW protocol for three senders. 
                                }
        \label{fig:graph3d}
    \end{figure}

    In order to characterize the rate region further we also derive an outer bound which all rate tuples must satisfy.
    \begin{theorem}     \label{thm:THM2}
    Let $\ket{\ph}^{A_1A_2\cdots A_mR}$ be a pure state input to a distributed compression protocol which
    achieves the rate tuple $(Q_1,Q_2,\ldots,Q_m)$, then it must be true that
        \be \label{outer-bound}
            \sum_{k\in \K} Q_k
            \geq
            \frac{1}{2} \left[ \sum_{k\in \K}\!\left[H(A_k)_\ph\right]  +  H(R)_\ph -  H(RA_{\K})_\ph \right]
                        - \Esq( A_{k_1};A_{k_2};\ldots ;A_{k_{|\K|}} )_\ph,
        \ee
    for all $\K \subseteq \{1,2,\ldots,m\}$, where \Esq is the multiparty squashed entanglement.
    \end{theorem}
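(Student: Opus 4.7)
The plan is to prove the outer bound by a converse-style argument.  Fix any protocol achieving the rate tuple $(Q_1,\ldots,Q_m)$ with fidelity at least $1-\epsilon$ on $n$ copies, and dilate each encoder to a Stinespring isometry $U_k:A_k^n\to C_kW_k$.  The post-encoding pure state is
\begin{equation*}
\ket{\Omega}^{CWR^n} \;=\; \Bigl(\bigotimes_{k=1}^m U_k\otimes I_{R^n}\Bigr)\ket{\psi}^{A^nR^n},
\end{equation*}
and the trivial dimensional bound yields $n\sum_{k\in\K}Q_k = \log\dim C_\K \geq H(C_\K)_\Omega$, so the problem reduces to producing a matching lower bound on $H(C_\K)_\Omega$.

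Two structural facts about $\Omega$ feed into that estimate.  First, by Uhlmann's theorem the success of the decoder forces $\Omega^{WR^n}$ to be close in trace distance to $\Omega^W\otimes\psi^{R^n}$, with an error $\delta(\epsilon)\to 0$ as $\epsilon\to 0$; monotonicity of trace distance under partial trace together with the Fannes--Audenaert inequality then yields an $o(n)$ bound on $I(W_\K;R^n)_\Omega$ and controls every entropy perturbation incurred by treating $\Omega^{W_\K R^n}$ as exactly product.  Second, because the encoders act isometrically and locally, the marginal $\Omega^{C_\K W_\K R^n}$ is unitarily equivalent to $\psi^{A_\K^n R^n}$ via $U_\K=\bigotimes_{k\in\K}U_k$, so any joint entropy on its subsystems is exactly $n$ times the corresponding entropy of $\ph$.

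The technical heart of the proof is an entropic inequality of the form
\begin{equation*}
2H(C_\K)_\Omega \;\geq\; n\!\left[\sum_{k\in\K}H(A_k)_\ph + H(R)_\ph - H(RA_\K)_\ph\right] - 2n\,\Esq(A_{k_1};\ldots;A_{k_{|\K|}})_\ph - o(n).
\end{equation*}
I plan to derive this by expanding $2H(C_\K)_\Omega$ with the chain rule and strong subadditivity, then using the isometric equivalences above and the $W$-$R^n$ decoupling to reorganize the result into the desired multiparty information quantity plus a residual multipartite conditional mutual information among the $C_{k_i}$.  The residual is controlled by a judicious choice of extension in the defining infimum of $\Esq(C_{k_1};\ldots;C_{k_{|\K|}})_\Omega$, and that post-encoding squashed entanglement is in turn bounded by $n\,\Esq(A_{k_1};\ldots;A_{k_{|\K|}})_\ph$ via monotonicity of $\Esq$ under the local CPTP maps $A_k\mapsto C_k$ obtained by discarding $W_k$, and additivity of $\Esq$ on the product state $\ph^{\otimes n}$; both properties are established in Section~\ref{sec:squashed-entanglement}.

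Combining the dimensional bound with the central entropic inequality, dividing by $n$, and sending $\epsilon\to 0$ (so that $o(n)/n\to 0$) yields the claimed bound.  The main obstacle is the central entropic inequality: because the defining infimum for $\Esq$ gives a conditional-mutual-information lower bound on $2\Esq$ for every extension while monotonicity of $\Esq$ gives $\Esq(\Omega\text{-marginal})\leq \Esq(\ph^{\otimes n}\text{-marginal})$, these two facts point in opposite directions and compose correctly only if the extension used on the $\Omega$ side is carefully chosen so that the Fannes-type errors from approximate decoupling are absorbed into the $o(n)$.  Keeping track of where the held-back environments $W_k$ enter the bookkeeping is the subtle part.
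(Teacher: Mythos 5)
Your setup is sound and matches the paper's: dilate the encoders, use the dimension bound $n\sum_{k\in\K}Q_k\geq H(C_\K)$, establish approximate decoupling of the waste from $R^n$ via the high output fidelity, and exploit purity of the global state for entropy bookkeeping. But the step you yourself flag as "the subtle part" is a genuine gap, not a detail: the quantifiers in your plan for the central entropic inequality run the wrong way. If your chain-rule expansion produces a residual equal to a conditional multiparty information $I(C_{k_1};\ldots;C_{k_{|\K|}}|E)$ for \emph{one particular} extension $E$, then the definition of $\Esq$ as an infimum only gives $I(C_{k_1};\ldots;C_{k_{|\K|}}|E)\geq 2\Esq(C_{k_1};\ldots;C_{k_{|\K|}})_\Omega$, and monotonicity gives $\Esq(C)_\Omega\leq n\,\Esq(A)_\ph$; these two inequalities do not compose into the upper bound on the residual that you need, for any choice of $E$. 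Moreover, the residual that actually falls out of the purity/decoupling bookkeeping is not a correlation among the transmitted systems $C_{k_i}$ but the \emph{unconditioned} multiparty information of the waste, $\sum_{i\in\K}H(W_i)-H(W_\K)$; if you try to force a residual among the $C$'s via subadditivity you end up with both residuals and a weaker bound.

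The resolution (which is how the paper proceeds) is to never invoke monotonicity of $\Esq$ under the encoding maps at all. One bounds the waste residual as follows: since $W_\K$ is approximately decoupled from $R^n$, conditioning on any system $E$ obtained by applying a channel to $R^n$ perturbs each $H(W_i)$ and $H(W_\K)$ by only $o(n)$, so
\begin{equation*}
\sum_{i\in\K}H(W_i)-H(W_\K)\ \leq\ I(W_{k_1};\ldots;W_{k_{|\K|}}|E)+o(n)
\end{equation*}
for \emph{every} such $E$; then, because each $W_i$ sits inside $C_iW_i\cong A_i$, the monotonicity and merging lemmas for the conditional multiparty information (Lemmas \ref{mergingJ} and \ref{cond-monotonicity}) upgrade the right-hand side to $I(A_{k_1};\ldots;A_{k_{|\K|}}|E)_\psi+o(n)$, still for every $E$. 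Only now does one take the infimum over $E$, landing on $2\Esq(A_{k_1};\ldots;A_{k_{|\K|}})_\psi+o(n)$, and subadditivity of $\Esq$ on $\ph^{\otimes n}$ (Proposition \ref{subadditiveTensorProducts} --- note you only need subadditivity here, not the additivity you cite, which the paper leaves open) reduces this to $2n\,\Esq(A_{k_1};\ldots;A_{k_{|\K|}})_\ph$. Without this reordering --- bound the residual by a conditional quantity uniformly over extensions first, take the infimum last --- the central inequality you assert does not follow from the ingredients you list.
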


    The multiparty squashed entanglement, independently discovered in \cite{multisquash}, is a measure of multipartite
    entanglement which generalizes the bipartite squashed entanglement of \cite{CW04}.
    Sections \ref{sec:multiparty-information} and
    \ref{sec:squashed-entanglement} below define the quantity and investigate some of its properties.
    The proof of \ref{thm:THM2} is given in section \ref{sec:THMIIproof}.

    Notice that Theorems \ref{thm:THM1} and \ref{thm:THM2} both provide bounds of the same form and
    only differ by the presence of the \Esq term.
    The rate region is squeezed somewhere between these two bounds as illustrated in Figure \ref{fig:2Dregion}.
    \begin{figure}[ht]  \begin{center}
        \def\JPicScale{0.8}
        \input{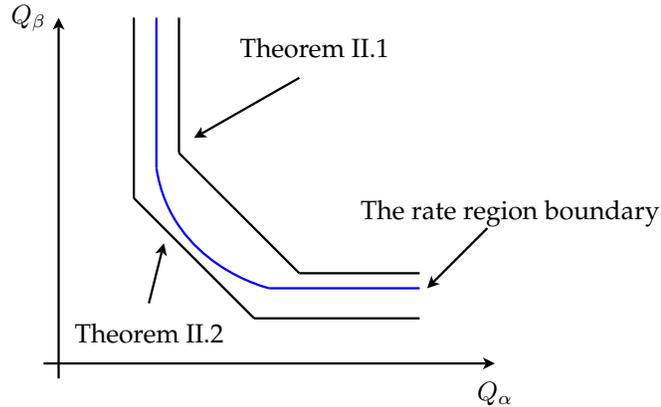} \def\JPicScale{1} \end{center}
        \caption{
            A two dimensional diagram showing the inner bound from Theorem \ref{thm:THM1} and the
            outer bound from Theorem \ref{thm:THM2}. The boundary of the real rate region must lie somewhere in between.}
        \label{fig:2Dregion}
    \end{figure}
    For states which have zero squashed entanglement, the inner and outer bounds on the region coincide so that
    in those cases our protocol is an optimal solution to the multiparty distributed compression problem.

    One can verify that when only two parties are involved ($m=2$), the inequalities in (\ref{inner-bound})
    reduce to the 2-party bounds in the original FQSW paper:
    \begin{align} \label{eqn:region}
            Q_1       &\geq \frac{1}{2} I({A_1};R), \nonumber \\
            Q_2       &\geq \frac{1}{2} I(A_2;R), \nonumber \\
            Q_1 + Q_2 &\geq \frac{1}{2}\left[ H({A_1}) + H({A_2})+ H({A_1}{A_2}) \right]. \nonumber
    \end{align}

    The family of inequalities (\ref{outer-bound}) similarly reduce to the corresponding expressions in \cite{FQSW} with
    the multiparty squashed entanglement being replaced by the original two-party squashed entanglement of \cite{CW04}.


\section{Proof of the Achievable Rates} \label{sec:THMIproof}

    %
    The multiparty fully quantum Slepian-Wolf protocol can be constructed through the repeated application of the
    two-party FQSW protocol \cite{FQSW}.
    In the multiparty case, however, the geometry of the rate region is more involved and some concepts
    from the theory of polyhedra \cite{poly} prove helpful in giving it a precise characterization.
    Multiparty rate regions in information theory have previously appeared in \cite{C75,TH98}.

    For every permutation $\pi \in S_m$ of the $m$ senders, there is a different rate tuple
    $\vec{q}_\pi = (Q_1,Q_2,\ldots,Q_m)_\pi \in \RR^m$ which is achievable in the limit of many copies of the state.
    By time-sharing we can achieve any rate that lies in the \emph{convex hull} of these points.
    We will show that the rate region for an input state $\ket{\ph}^{A_1\cdots A_mR}$ can equivalently be described
    by the set of inequalities from Theorem \ref{thm:THM1}, that is
    \be \label{inner-bound-biss}
        \sum_{k\in \K} Q_k
        \geq
        \frac{1}{2} \left[ \sum_{k\in \K}\!H(A_k)_\ph  +  H(R)_\ph -  H(RA_{\K})_\ph \right] =: C_\K
    \ee
    where $\K \subseteq \{1,2,\ldots,m\}$ ranges over all subsets of participants and $C_\K$ is the name we give to
    the constant on the right hand side of the inequality.
    The proof of Theorem \ref{thm:THM1} proceeds in two steps. First we show the set of rate tuples $\{\vec{q}_\pi\}$ is
    contained in the rate region and then we prove that the set of inequalities \eqref{inner-bound-biss} is an
    equivalent description of the rates obtained by time sharing and resource wasting of the rates $\{\vec{q}_\pi\}$.


    Consider the $m$-dimensional space of rate tuples $(Q_1,\cdots,Q_m) \in \RR^m$.
    We begin by a formal definition of a corner point $\qq_\pi$.
    \begin{definition}[Corner point]
        Let $\pi \in S_m$ be a permutations of the senders in the protocol.
        The corresponding rate tuple $q_\pi=(Q_1,Q_2,\ldots,Q_m)$ is a corner point if
        \be     \label{individual-rates}
            Q_{\pi\!(k)}    =   \frac{1}{2}I(A_{\pi\!(k)};A_{\pi\!(k+1)}\cdots A_{\pi\!(m)}R)
        \ee
        where the set $A_{\pi\!(k+1)}\cdots A_{\pi\!(m)}$ denotes all the systems which come after $k$ in
        the permutation $\pi$.
    \end{definition}

    We define $\Q := \{ \qq_\pi : \pi \in S_m \}$, the set of all corner points. 
    Clearly, $\vert \Q \vert \leq m!$ but since some permutations might lead to the same rate tuple, 
    the inequality may be strict.

    \begin{lemma}   \label{q-point}
        The set of corner points, $\Q = \{ \qq_\pi : \pi \in S_m \}$, is contained in the rate region.
    \end{lemma}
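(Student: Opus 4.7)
\textbf{Proof plan for Lemma~\ref{q-point}.}

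The plan is to build, for every permutation $\pi \in S_m$, an explicit distributed-compression protocol that achieves the rate tuple $\qq_\pi$, by iterating the two-party FQSW protocol of Section~\ref{subsec:FQSW-protocol} in the order prescribed by $\pi$. Without loss of generality relabel so that $\pi$ is the identity; the senders then act in the order $1,2,\ldots,m$. In round $k$, sender $A_k$ regards the composite system $R_{\text{eff},k} := A_{k+1}A_{k+2}\cdots A_m R$ as her effective reference and Charlie (who by that point holds the systems produced by rounds $1,\ldots,k-1$) as the effective Bob. Applying the two-party FQSW protocol to the pure state $\ket{\psi}^{A_k, C_{<k}, R_{\text{eff},k}}$, where $C_{<k}$ is what Charlie already holds, she needs to transmit at rate
\[
Q_k \; > \; \tfrac{1}{2}\,I(A_k;R_{\text{eff},k})_\ph \; = \; \tfrac{1}{2}\,I(A_k;A_{k+1}\cdots A_m R)_\ph,
\]
which is exactly the $k$-th coordinate of $\qq_\pi$ in equation~\eqref{individual-rates}.

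Next I would verify that these iterated applications are legitimate. The essential observation is that the FQSW encoding acts only on the current sender's system while the decoding acts only on what Charlie already holds plus the newly transmitted qubits; neither touches the residual reference $R_{\text{eff},k+1}$. Consequently the reduced state on $A_{k+1}\cdots A_m R$ after round $k$ is \emph{exactly} equal to the corresponding marginal of $(\ket{\ph})^{\otimes n}$, and so is in the IID form required to invoke FQSW again at round $k+1$. Moreover the mutual information that determines the next round's rate is computed in this unchanged marginal, so all quantities can be written in terms of the original state $\ph$, matching~\eqref{individual-rates}.

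The remaining step is the error analysis. Each invocation of FQSW produces a final state that is $\epsilon_k$-close in trace distance to the target $\ket{\Phi}^{A_{k,2}\widetilde{C}_k}\otimes\ket{\ph'}^{R_{\text{eff},k}\widehat{A}_k}$. Because a CPTP map cannot increase the trace distance and because the subsequent rounds act as CPTP maps on pieces of the output state, the errors accumulate additively: after all $m$ rounds the overall trace distance from the ideal target state (which has Charlie holding $\widehat{A}_1\cdots\widehat{A}_m$ purifying $R$) is at most $\sum_{k=1}^m \epsilon_k$. Choosing each $\epsilon_k \le \epsilon/m$ and invoking the standard conversion between trace distance and fidelity yields the fidelity bound~\eqref{rate-region} for the composed protocol at rate tuple $\qq_\pi + (\delta,\ldots,\delta)$ for arbitrarily small $\delta$; taking the closure puts $\qq_\pi$ itself in the rate region.

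The main obstacle, and the point that needs to be argued carefully rather than just asserted, is that the small decoupling error of one round does not corrupt the IID structure required by the next. I would handle this with a short perturbation argument: since each round leaves the uninvolved marginal exactly invariant (the imperfection is confined to the correlation between Charlie's output and the virtual target, not to the state on the remaining senders), the state seen by sender $k+1$ can be coupled to an exact IID state up to an error bounded by the accumulated trace distance, and the strict inequality in the FQSW rate condition absorbs this perturbation.
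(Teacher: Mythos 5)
Your proposal follows essentially the same route as the paper's own (sketched) proof: iterate the two-party FQSW protocol in the order given by $\pi$, treating the not-yet-transmitted systems together with $R$ as the effective reference at each round, and accumulate the per-round errors via the triangle inequality. Your additional care about why the IID structure survives each round (the untouched marginal is exactly preserved, and the strict rate inequality absorbs the perturbation) fills in details the paper leaves implicit, but it is not a different argument.
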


    \begin{proof}[Proof sketch for Lemma \ref{q-point}]
        We will now exhibit a protocol that achieves one such point. In order to simplify the notation,
        but without loss of generality, we choose the reversed-order permutation $\pi=(m, \ldots, 2,1)$.
        This choice of permutation corresponds to Alice-$m$ sending her information first and Alice-$1$ sending last.

        We will repeatedly use the FQSW protocol is order to send the $m$ systems to Charlie:
        \begin{enumerate}
            \item       The first party Schumacher compresses her system $A_m$ and sends it to Charlie.
                        She succeeds provided
                        \be \nonumber
                            Q_m     \geq    \frac{1}{2}I(A_m;A_1A_2\ldots A_{m-1}R) + \delta= H(A_m) + \delta
                        \ee
                        for any $\delta > 0$.
                        The above rate is dictated by the FQSW inequality \eqref{eqn:FQSW} because
                        we are facing the same type of problem except that the ``reference'' consists
                        of $R$ as well as the remaining participants $A_1A_2\cdots A_{m-1}$.
                        The fact that the formula reduces to $Q_m > H(A_m)$ should also be expected
                        since there are no correlations that the first participant can take advantage of;
                        she is just performing Schumacher compression.

            \item       The second party also faces an instance of an FQSW problem.
                        The task is to transmit the system $A_{m-1}$ to Charlie, who is now assumed to hold $A_m$.
                        The purifying system consists of  $A_1A_2\cdots A_{m-2}R$.
                        According to inequality (\ref{eqn:FQSW}) the rate must be
                        \be \nonumber
                            Q_{m-1} \geq    \frac{1}{2}I(A_{m-1};A_1A_2\cdots A_{m-2}R) + \delta
                        \ee
                        for any $\delta > 0$.

            \item       The last person to be merging with Charlie will have a purifying system consisting
                        of only $R$. Her transfer will be successful if
                        \be \nonumber
                            Q_1     \geq    \frac{1}{2}I(A_1;R) + \delta
                        \ee
                        for any $\delta > 0$.
        \end{enumerate}

        On the receiving end of the protocol, Charlie will apply the decoding map $D$ consisting of the
        composition of the decoding maps $D_1 \circ D_2 \circ \cdots \circ D_m$ defined by the individual
        FQSW steps to recover the
        state $\sigma^{\widehat{A}_1\widehat{A}_2 \cdots \widehat{A}_m}$, which will be such that
        the fidelity between $\ket{\psi}^{A^n R^n}$ and $\sigma^{\hat{A}^n R^n}$ is high, essentially by the triangle
        inequality.
        Finally, because we can make $\delta$ arbitrarily small, the rate tuple $(Q_1,\cdots,Q_m)$, with
        \be
            Q_k     =   \frac{1}{2}I(A_k; A_{1}\cdots A_{k\!-\!1}R),
        \ee
        must be contained in the rate region. The same argument applies for each permutation $\pi \in S_m$, leading
        to the conclusion that the full set $\Q$ is contained in the rate region.
    \end{proof}

    Each one of the corner points $\qq_\pi$ can also be described by an equivalent set of equations involving
    sums of the rates.
    \begin{lemma} \label{lemmaWithL}
        The rate tuple $(Q_1,Q_2,\ldots,Q_m)$ is a corner point if and only if for some $\pi \in S_m$ and for
        all $l$ such that $1\leq l \leq m$,
        \be \label{equationWithL}
            \sum_{m-l+1 \leq k \leq m} Q_{\pi\!(k)} =    \frac{1}{2}
                                                        \left[
                                                            \sum_{m-l+1 \leq k \leq m}\!\!\!\!\!\!\!\!\!
                                                            H(A_{\pi\!(k)}) + H(R)
                                                            - H(A_{\pi\![m-l+1, m]}R)
                                                        \right]
                                                        = C_{\pi\![m-l+1, m]}
        \ee
        where $A_{\pi\![m-l+1, m]} := A_{\pi\!(m-l+1)}A_{\pi\!(m-l+2)}\cdots A_{\pi\!(m)}$ denotes
        the last $l$ participants according to the permutation $\pi$.
    \end{lemma}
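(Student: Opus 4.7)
The plan is to prove both directions by a telescoping argument on the mutual information terms. Without loss of generality, to simplify the notation I would fix a permutation $\pi$ and, for each $k \in \{1, \ldots, m+1\}$, introduce the shorthand $S_k = A_{\pi(k)} A_{\pi(k+1)} \cdots A_{\pi(m)} R$ with the convention that $S_{m+1} = R$. With this notation, the corner point definition \eqref{individual-rates} reads $Q_{\pi(k)} = \tfrac{1}{2} I(A_{\pi(k)}; S_{k+1})$, which by expanding the mutual information becomes
\begin{equation*}
    Q_{\pi(k)} = \tfrac{1}{2} \left[ H(A_{\pi(k)}) + H(S_{k+1}) - H(S_k) \right].
\end{equation*}

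For the forward implication, I would simply sum this identity over $k$ from $m-l+1$ to $m$. The terms $H(S_{k+1}) - H(S_k)$ form a telescoping sum that collapses to $H(S_{m+1}) - H(S_{m-l+1}) = H(R) - H(A_{\pi[m-l+1,m]} R)$, and the remaining $H(A_{\pi(k)})$ terms combine into the desired entropy sum. This directly gives \eqref{equationWithL}.

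For the reverse implication, I would recover the individual rates by taking successive differences. The $l=1$ instance of \eqref{equationWithL} immediately yields $Q_{\pi(m)} = \tfrac{1}{2} I(A_{\pi(m)}; R)$, agreeing with the corner point definition. For $l \geq 2$, subtracting the instance of \eqref{equationWithL} for $l-1$ from the instance for $l$ leaves only the single rate $Q_{\pi(m-l+1)}$ on the left-hand side, while on the right the only surviving entropy terms combine into $\tfrac{1}{2} [H(A_{\pi(m-l+1)}) + H(A_{\pi[m-l+2,m]} R) - H(A_{\pi[m-l+1,m]} R)]$, which is precisely $\tfrac{1}{2} I(A_{\pi(m-l+1)}; S_{m-l+2})$, as required.

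There is no genuine conceptual obstacle here; both directions are formal manipulations. The only thing to watch out for is index bookkeeping, particularly the boundary conventions $S_{m+1} = R$ and the fact that the sets $A_{\pi[m-l+1, m]}$ are nested so that differences of the corresponding entropies telescope cleanly. I would present the argument as a single chain of equalities in each direction to keep the proof short and transparent.
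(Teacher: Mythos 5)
Your proposal is correct and follows essentially the same route as the paper: the forward direction is exactly the paper's telescoping of the mutual-information expansions of the corner-point rates, and your successive-differences argument for the converse makes explicit the paper's remark that the triangular system of sum equations is "clearly solvable for the individual rates $Q_k$." The index bookkeeping with $S_k = A_{\pi(k)}\cdots A_{\pi(m)}R$ and $S_{m+1}=R$ is handled correctly.
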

    \begin{proof}[Proof of Lemma \ref{lemmaWithL}]
        The proof follows trivially from Lemma \ref{q-point} by considering sums of the rates.
        If we again choose the permutation $\pi=(m, \ldots, 2,1)$ for simplicity, we see that the sum of the
        rates of the last $l$ participants is
        \begin{align}
            Q_1 +  \cdots + Q_l         &= \frac{1}{2}\bigg[ I(A_1;R)
                                                    + I(A_2;A_1R) +\cdots
                                                    + I(A_l;A_1 \cdots A_{l-1}R) \bigg] \nonumber \\
                            &=  \frac{1}{2}\bigg[
                                                        \sum_{1 \leq k \leq l} H(A_k)  + H(R) -  H(A_1\cdots A_lR)
                                                        \bigg] = C_{12\ldots l}.
        \end{align}
        A telescoping effect occurs and most of the inner terms cancel so we are left with a system of equations
        identical to \eqref{equationWithL}. Moreover, this system is clearly solvable for
        the individual rates $Q_k$. The analogous simplification occurs for all other permutations.
    \end{proof}


    So far, we have shown that the set
    of corner points $\Q$ is contained in the rate region of the multiparty fully quantum Slepian-Wolf protocol.
    The convex hull of a set of points $\Q$ is defined to be
    \be \label{conv-def}
        conv(\Q):= \left\{  \vec{x}  \in \RR^m :\
                            \vec{x} = \sum \lambda_i \vec{q}_i, \
                            \vec{q}_i \in \Q,\
                            \lambda_i \geq 0,\
                            \sum \lambda_i = 1
                    \right\}.
    \ee
    Because of the possibility of time-sharing between the different corner points, the entire convex hull
    $conv(\Q)$ must be achievable.
    Furthermore, by simply allowing any one of the senders to waste resources, we know that if a rate tuple
    $\qq$ is achievable, then so is $\qq + \vec{w}$ for any vector $\vec{w}$ with nonnegative coefficients.
    More formally, we say that any $\qq + cone(\vec{e}_1, \vec{e}_2, \ldots, \vec{e}_m)$ is also inside the
    rate region, where $\{\vec{e}_i\}$ is the standard basis for $\RR^m$:
    $\vec{e}_i = (\underbrace{0, 0, \ldots, 0, 1}_{i}, 0, 0)$ and
    \vspace{-0.4cm}
    \be \label{cone-def}
        cone(\vec{e}_1,\cdots,\vec{e}_m) :=
                \left\{     \vec{x}  \in \RR^m : \
                            \vec{x} = \sum \lambda_i \vec{e}_i,\
                            \lambda_i \geq 0 \right\}.
    \ee
    Thus, we have demonstrated that the set of rates
    \be
        P_\mathcal{V} := conv(\Q) + cone(\vec{e}_1,\cdots,\vec{e}_m)
    \ee
    is achievable. 
    To complete the proof of Theorem \ref{thm:THM1}, we will need to show that $P_\mathcal{V}$ has an equivalent
    description as
    \be \label{inner-bound-bisss}
        P_\mathcal{H}:=     \left\{ (Q_1,\cdots,Q_m) \in \RR^m \ :\
                                        \sum_{k\in \K} Q_k  \geq  C_\K ,
                                        \forall \K \subseteq \{1,2,\ldots,m\}               \right\},
    \ee
    where the constants $C_\K$ are as defined in equation \eqref{inner-bound-biss}.
    This equivalence is an explicit special case of the Minkowski-Weyl Theorem on convex polyhedra.

    \begin{theorem}[Minkowski-Weyl Theorem] \cite[p.30]{poly}
    For a subset $P \subseteq \RR^m$, the following two statements are equivalent:
    \begin{itemize}
        \item   $P$ is a $\mathcal{V}$-polyhedron: the sum of a convex hull of a finite set of points
                $\Pp = \{ \vec{p}_i\}$ plus a conical combination of vectors $\Ww = \{\vec{w}_i\}$
                \be \label{vpolyhedron}
                    P = conv(\Pp) \ +\  cone(\Ww)
                \ee
                where $conv(\Pp)$ and $cone(\Ww)$ are defined in \eqref{conv-def} and \eqref{cone-def} respectively.


        \item   $P$ is a $\mathcal{H}$-polyhedron: an intersection of $n$ closed halfspaces
                \be \label{hpolyhedron}
                    P = \{ \vec{x} \in \RR^m : A\vec{x} \geq \vec{a} \}
                \ee
                for some matrix $A \in \RR^{n\times m}$ and some vector $\vec{a} \in \RR^n$.
                Each of the $n$ rows in equation \eqref{hpolyhedron} defines one halfspace.
    \end{itemize}
    \end{theorem}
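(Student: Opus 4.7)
The plan is to prove the two implications separately, using Fourier-Motzkin elimination as the main workhorse. Recall that Fourier-Motzkin says that if a set in $\RR^{m+k}$ is the solution set of finitely many linear inequalities, then its coordinate projection onto $\RR^m$ is again cut out by finitely many linear inequalities: one picks a variable, pairs each inequality of positive coefficient with each of negative coefficient to cancel it, and retains the resulting restrictions. Iterating this shows that the class of H-polyhedra is closed under projection onto any coordinate subspace.

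For the direction $\mathcal{V}$-polyhedron implies $\mathcal{H}$-polyhedron, given $P = \mathrm{conv}(\Pp) + \mathrm{cone}(\Ww)$ with $\Pp = \{\vec{p}_1,\ldots,\vec{p}_r\}$ and $\Ww = \{\vec{w}_1,\ldots,\vec{w}_s\}$, I would lift $P$ to
\[
  \widetilde P = \Bigl\{ (\vec{x},\vec{\lambda},\vec{\mu}) \in \RR^{m+r+s} : \vec{x} = \sum_i \lambda_i \vec{p}_i + \sum_j \mu_j \vec{w}_j,\ \lambda_i \geq 0,\ \mu_j \geq 0,\ \sum_i \lambda_i = 1 \Bigr\}.
\]
This $\widetilde P$ is manifestly an H-polyhedron, since each equality is a pair of inequalities. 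Projecting out the $r+s$ auxiliary coordinates by Fourier-Motzkin then gives a finite H-description of $P$ itself.

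For the converse H-polyhedron implies V-polyhedron direction, I would first reduce to the statement for cones via homogenization. Given $P = \{\vec{x} : A\vec{x} \geq \vec{a}\}$, form the H-cone $\widehat P = \{(\vec{x},t) \in \RR^{m+1} : A\vec{x} - t\vec{a} \geq 0,\ t \geq 0\}$; the polyhedron $P$ is recovered as the slice $\{\vec{x} : (\vec{x},1) \in \widehat P\}$, so it suffices to show every H-cone is finitely generated. For this I would invoke polarity: define $K^{\circ} = \{\vec{y} : \langle \vec{y}, \vec{x}\rangle \geq 0 \text{ for all } \vec{x} \in K\}$. Because $K$ is the intersection of finitely many halfspaces through the origin, $K^{\circ}$ is immediately the conical hull of the inward normals, i.e.\ a V-cone. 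Applying the already proved V-to-H direction to $K^{\circ}$ exhibits $K^{\circ}$ as an H-cone, and the bipolar identity $K = (K^{\circ})^{\circ}$ for closed convex cones then expresses $K$ as a V-cone.

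The main obstacle is the H-to-V direction, where one must avoid a circular use of duality. Specifically, the bipolar identity rests on closedness of $K$ (automatic for H-cones) and on a Hahn-Banach style separation argument that must be established independently. A second delicate point is checking that Fourier-Motzkin applied to the V-cone $K^{\circ}$ produces halfspaces through the origin rather than arbitrary affine halfspaces, which follows from homogeneity of the defining system. Once these separation and homogeneity details are handled cleanly, the two directions close the loop and the Minkowski-Weyl equivalence follows.
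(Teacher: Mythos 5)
The paper does not actually prove this statement: it is quoted as a known result, with the citation pointing to p.~30 of the polytopes reference, and is used as a black box. (What the paper proves itself, in Steps 1--3 of Section~\ref{sec:THMIproof}, is the explicit identification of the vertices and cone part of its particular supermodular polyhedron $P_\mathcal{H}$ --- that is an application of Minkowski--Weyl, not a proof of it.) So your proposal must be judged as a free-standing proof of the general theorem. Its architecture --- Fourier--Motzkin elimination to show $\mathcal{H}$-polyhedra are closed under coordinate projection, the lift $\widetilde P$ with the $\lambda,\mu$ variables for the $\mathcal{V}\Rightarrow\mathcal{H}$ direction, and homogenization to reduce the converse to the statement for cones --- is the standard textbook route and is sound.

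The one genuine soft spot is in the $\mathcal{H}\Rightarrow\mathcal{V}$ direction, where you assert that for an $\mathcal{H}$-cone $K=\{\vec x:\langle\vec a_i,\vec x\rangle\ge 0,\ i=1,\dots,n\}$ the polar $K^{\circ}$ is ``immediately'' $\mathrm{cone}(\vec a_1,\dots,\vec a_n)$. Only the inclusion $\mathrm{cone}(\vec a_i)\subseteq K^{\circ}$ is immediate; the reverse inclusion is exactly Farkas' lemma, equivalently the bipolar identity applied to the finitely generated cone $C=\mathrm{cone}(\vec a_i)$ (since $K=C^{\circ}$, you need $K^{\circ}=C^{\circ\circ}=C$, which also requires knowing $C$ is closed --- that much does follow from your already-proved $\mathcal{V}\Rightarrow\mathcal{H}$ direction, which writes $C$ as an intersection of closed halfspaces). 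So the step you label as immediate is where the content lives, and it is the same separation/bipolarity input you flag at the end; you are using it twice, not once. You can close the loop without any Hahn--Banach machinery at all: by $\mathcal{V}\Rightarrow\mathcal{H}$ write $C=\mathrm{cone}(\vec a_i)=\{\vec y:B\vec y\ge 0\}$ with rows $\vec b_j$; each $\vec b_j$ lies in $K$ because $\langle\vec b_j,\vec a_i\rangle\ge 0$ for all $i$, giving $\mathrm{cone}(\vec b_j)\subseteq K$; for the reverse inclusion apply $\mathcal{V}\Rightarrow\mathcal{H}$ once more to write $\mathrm{cone}(\vec b_j)=\{\vec x:D\vec x\ge 0\}$ with rows $\vec d_l$, note that $B\vec d_l\ge 0$ so each $\vec d_l\in C$ is a nonnegative combination of the $\vec a_i$, and conclude that every $\vec x\in K$ satisfies $D\vec x\ge 0$, i.e.\ $K\subseteq\mathrm{cone}(\vec b_j)$. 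This keeps the whole proof combinatorial and avoids the separation and closedness caveats you would otherwise have to discharge.
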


    \vspace{0.45cm}

    \myparagraph{Preliminaries}
    Before we begin the equivalence proof in earnest, we make two useful observations which will be
    instrumental to our subsequent argument.
    First, we prove a very important property of the constants $C_{\K}$ which will dictate
    the geometry of the rate region.
    \begin{lemma}[Superadditivity] \label{lemma-intersection-union}
        Let $\K, \cL \subseteq \{1,2,\ldots,m\}$ be any two subsets of the senders. Then
        \be \label{lower-order}
            C_{\K\cup\cL} + C_{\K\cap\cL} \  \geq\  C_{\K} + C_{\cL}.
        \ee
    \end{lemma}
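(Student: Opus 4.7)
The plan is to expand the quantity $C_{\K\cup\cL} + C_{\K\cap\cL} - C_{\K} - C_{\cL}$ using the definition of $C_\K$ and observe that the inequality reduces to a single instance of strong subadditivity of von Neumann entropy.

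First I would collect the three types of terms that appear in each $C_\K$. The single-party entropy contributions $\sum_{k\in\K} H(A_k)$ satisfy the inclusion--exclusion identity
\[
    \sum_{k\in\K\cup\cL} H(A_k) + \sum_{k\in\K\cap\cL} H(A_k) = \sum_{k\in\K} H(A_k) + \sum_{k\in\cL} H(A_k),
\]
and the $H(R)$ terms trivially cancel (two copies on each side). After pulling out the common factor of $1/2$, the inequality \eqref{lower-order} therefore reduces to
\[
    H(RA_\K) + H(RA_\cL) \ \geq\ H(RA_{\K\cup\cL}) + H(RA_{\K\cap\cL}).
\]

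Next I would recognize this last inequality as strong subadditivity in disguise. Setting $X = RA_{\K\cap\cL}$, $Y = A_{\K\setminus\cL}$, and $Z = A_{\cL\setminus\K}$, the four composite systems become $RA_\K = XY$, $RA_\cL = XZ$, $RA_{\K\cup\cL} = XYZ$, and $RA_{\K\cap\cL} = X$, so the required bound is exactly
\[
    H(XY) + H(XZ) \geq H(XYZ) + H(X),
\]
which is the standard form of strong subadditivity. Since the disjoint union decomposition $\K = (\K\cap\cL) \sqcup (\K\setminus\cL)$ (and analogously for $\cL$ and $\K\cup\cL$) makes $Y$ and $Z$ well-defined subsystems of $A$, no conceptual obstacle arises; the main (and only) nontrivial ingredient is strong subadditivity itself. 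I would then conclude by multiplying through by $1/2$ to obtain \eqref{lower-order}.
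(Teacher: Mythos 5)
Your proposal is correct and follows essentially the same route as the paper: expand the $C_\K$ terms, note that the local-entropy sums obey inclusion--exclusion and the $H(R)$ terms cancel, and reduce the claim to $H(RA_\K)+H(RA_\cL)\geq H(RA_{\K\cup\cL})+H(RA_{\K\cap\cL})$, which is strong subadditivity. Your explicit identification $X=RA_{\K\cap\cL}$, $Y=A_{\K\setminus\cL}$, $Z=A_{\cL\setminus\K}$ just spells out the final step that the paper cites without elaboration.
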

    \begin{proof}[Proof of Lemma \ref{lemma-intersection-union}]
        We expand the $C$ terms and cancel the $\frac{1}{2}$-factors to obtain
        \begin{align} \nonumber
        \begin{aligned}
            \sum_{k\in {\K\cup\cL} }&\!H(A_k)  +  H(R) -  H(RA_{\K\cup\cL}) \\[-2mm]
            &+  \sum_{k\in {\K\cap\cL} }\!H(A_k)  +  H(R) -  H(RA_{\K\cap\cL})
        \end{aligned}
                            \ \ \ \ \ \ \ \ \ &\geq \ \ \
        \begin{aligned}
            &\sum_{k\in \K}\!H(A_k)  +  H(R) -  H(RA_{\K}) \\[-2mm]
            &\qquad  + \sum_{k\in \cL}\!H(A_k)  +  H(R) -  H(RA_{\cL}).
        \end{aligned}
        \end{align}
        After canceling all common terms we find that the above inequality is equivalent to
        %
        \be                                                                                   %
            H(RA_{\K}) +  H(RA_{\cL})
                  \ \ \ \geq \ \ \
                                H(RA_{\K\cup\cL}) + H(RA_{\K\cap\cL}),
        \ee
        which is true by the strong subadditivity (SSA) inequality of quantum entropy~\cite{LR73}.
    \end{proof}

    As a consequence of this lemma, we can derive an equivalence property for the saturated inequalities.
    \begin{corollary} \label{meet-join-lemma}
        Suppose that the following two equations hold for a given point of $P_\mathcal{H}$:
        \be
            \sum_{k\in \K} Q_k = C_{\K}             \qquad \text{and} \qquad  \sum_{k\in \cL} Q_k = C_{\cL}.
        \ee
        Then the following equations must also be true:
        \be
            \sum_{k\in \K\cup\cL} Q_k = C_{\K\cup\cL} \qquad  \text{and} \qquad  \sum_{k\in \K\cap\cL} Q_k = C_{\K\cap\cL}.
        \ee
    \end{corollary}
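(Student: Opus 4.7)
The plan is to combine the hypothesis with two inequalities that are already available: the defining inequalities of $P_\mathcal{H}$ applied at the indices $\K\cup\cL$ and $\K\cap\cL$, and the superadditivity bound for the constants $C_{(\cdot)}$ from Lemma~\ref{lemma-intersection-union}. The chain of these will be forced to collapse into a chain of equalities, which delivers both desired identities at once.

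More concretely, I would first observe the purely combinatorial identity
\begin{equation}
\sum_{k\in \K} Q_k \;+\; \sum_{k\in \cL} Q_k \;=\; \sum_{k\in \K\cup\cL} Q_k \;+\; \sum_{k\in \K\cap\cL} Q_k,
\end{equation}
which holds because every element of $\K\cap\cL$ is counted twice on each side and every element of the symmetric difference is counted once. Next, the assumption that the point lies in $P_\mathcal{H}$ supplies the two inequalities $\sum_{k\in \K\cup\cL} Q_k \geq C_{\K\cup\cL}$ and $\sum_{k\in \K\cap\cL} Q_k \geq C_{\K\cap\cL}$, while Lemma~\ref{lemma-intersection-union} gives $C_{\K\cup\cL}+C_{\K\cap\cL} \geq C_{\K}+C_{\cL}$. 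Using the two hypotheses $\sum_{k\in\K}Q_k = C_\K$ and $\sum_{k\in\cL}Q_k = C_\cL$ to rewrite the left side of the combinatorial identity, the chain becomes
\begin{equation}
C_{\K}+C_{\cL} \;=\; \sum_{k\in \K\cup\cL} Q_k + \sum_{k\in \K\cap\cL} Q_k \;\geq\; C_{\K\cup\cL}+C_{\K\cap\cL} \;\geq\; C_{\K}+C_{\cL},
\end{equation}
so every inequality in the chain is an equality.

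From that squeezing, both $\sum_{k\in \K\cup\cL} Q_k + \sum_{k\in \K\cap\cL} Q_k = C_{\K\cup\cL}+C_{\K\cap\cL}$ and each individual sum being at least its corresponding $C$ constant force $\sum_{k\in \K\cup\cL} Q_k = C_{\K\cup\cL}$ and $\sum_{k\in \K\cap\cL} Q_k = C_{\K\cap\cL}$, which is the claim. I do not anticipate any real obstacle here: the only nontrivial ingredient is the superadditivity statement of Lemma~\ref{lemma-intersection-union}, which has already been proved via strong subadditivity, and everything else is elementary bookkeeping. The main conceptual point to highlight is that this corollary is precisely the statement that the family of facet-defining inequalities indexed by subsets of $\{1,\dots,m\}$ forms a \emph{submodular} system, a structural fact that will be used in the subsequent vertex/facet analysis of $P_\mathcal{H}$.
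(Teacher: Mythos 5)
Your proof is correct and is essentially the paper's own argument: the same squeeze combining the counting identity $\sum_{\K}Q_k+\sum_{\cL}Q_k=\sum_{\K\cup\cL}Q_k+\sum_{\K\cap\cL}Q_k$, the defining inequalities of $P_\mathcal{H}$ at $\K\cup\cL$ and $\K\cap\cL$, and the superadditivity of Lemma~\ref{lemma-intersection-union}, forcing equality throughout. One small slip in your closing remark: the structural property here is \emph{supermodularity} of the constants $C_{(\cdot)}$ (the paper calls $P_\mathcal{H}$ a supermodular polyhedron, i.e.\ a contra-polymatroid), not submodularity.
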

    \begin{proof}[Proof of Corollary \ref{meet-join-lemma}]
        The proof follows from the equation
        \be \label{chained-eqn-meet-join-lemma}
            \sum_{k\in \K} Q_k + \sum_{k\in \cL} Q_k
            =           C_{\K} + C_{\cL}
            \ \leq\     C_{\K\cup\cL} + C_{\K\cap\cL}
            \ \leq\     \sum_{k\in \K\cup\cL} Q_k + \sum_{k\in \K\cap\cL} Q_k
        \ee
        where the first inequality comes from Lemma \ref{lemma-intersection-union}.
        The second inequality is true by the definition of $P_\mathcal{H}$ since $\K\cup\cL$ and $\K\cap\cL$
        are subsets of $\{1,2,\ldots,m\}$.
        Because the leftmost terms and rightmost terms are identical, we must have equality
        throughout equation \eqref{chained-eqn-meet-join-lemma}, which in turn implies the the union and the
        intersection equations are saturated.
    \end{proof}

	An important consequence of Lemma~\ref{lemma-intersection-union}  is that it implies that the polyhedron
	$P_\mathcal{H}$ has a very special structure. It is known as a supermodular polyhedron or
	contra-polymatroid. 
	The fact that $conv(Q) = P_\mathcal{H}$ was proved by Edmonds \cite{E69},
	whose ingenious proof makes use of linear programming duality. Below we give an elementary proof 
	that does not use duality.

    A \emph{vertex} is a zero-dimensional face of a polyhedron. 
	A point $ \bar{Q} = ( \bar{Q}_1,\bar{Q}_2,\ldots,\bar{Q}_m ) \in P_\mathcal{H} \subset \RR^m$
	is a vertex of $P_\mathcal{H}$ if and only if it is the unique solution of a set of linearly
	independent equations
    \be     \label{original-set-of-equations}
            \sum_{k\in \cL_i} Q_k        =  C_{\cL_i}, \qquad \qquad 1 \leq i \leq m
    \ee
    for some subsets $\cL_i \subseteq \{1,2,\ldots,m\}$. 
    In the remainder of the proof we require only a specific consequence of linear independence,
    which we state in the following lemma.
    \begin{lemma}[No co-occurrence] \label{no-co}
        Let $\cL_i \subseteq \{1,2,\ldots,m\}$ be a collection of $m$ sets such that the system
        \eqref{original-set-of-equations} has a unique solution. 
        Then there is no pair of elements $j$, $k$ such that $j \in \cL_i$ if and only if  $k \in \cL_i$  for all $i$.
    \end{lemma}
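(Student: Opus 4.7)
The plan is to argue by contrapositive: I will show that if a co-occurrence pair $(j,k)$ exists, then the solution to the linear system cannot be unique. This is a purely linear-algebraic observation that does not use anything about the specific constants $C_{\cL_i}$.

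First I would rewrite the system \eqref{original-set-of-equations} as a matrix equation $M\vec{Q} = \vec{C}$, where $M$ is the $m \times m$ incidence matrix with entries $M_{i,k} = 1$ if $k \in \cL_i$ and $M_{i,k} = 0$ otherwise, and $\vec{C}$ is the column vector with entries $C_{\cL_i}$. A system of this form has a unique solution if and only if $M$ is invertible, equivalently if and only if $\ker(M) = \{\vec{0}\}$.

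Now suppose for contradiction that there is a pair of distinct indices $j, k \in \{1,2,\ldots,m\}$ with the property that $j \in \cL_i \iff k \in \cL_i$ for every $i$. This hypothesis says exactly that $M_{i,j} = M_{i,k}$ for all $i$, i.e., the $j$th and $k$th columns of $M$ are identical. Consequently the vector $\vec{e}_j - \vec{e}_k$ lies in $\ker(M)$, and it is nonzero since $j \neq k$.

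Finally, given any solution $\bar{Q}$ of the system, the perturbed vector $\bar{Q} + t(\vec{e}_j - \vec{e}_k)$ is also a solution for every $t \in \RR$, contradicting the assumed uniqueness. Hence no such co-occurrence pair can exist. I do not anticipate any real obstacle here; the only thing to be careful about is to make clear that "linear independence of the equations," as invoked in the paragraph preceding the lemma, is used in exactly the form "the coefficient matrix has trivial kernel," which is the only property needed.
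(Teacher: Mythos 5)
Your argument is correct and is essentially the paper's own proof, just spelled out in more detail: the paper likewise observes that a co-occurrence pair forces the $j$th and $k$th columns of the coefficient matrix to coincide, making them linearly dependent and contradicting uniqueness of the solution. Your explicit identification of the kernel vector $\vec{e}_j - \vec{e}_k$ is a fine way to make that one-line observation precise.
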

    \begin{proof}
        If there was such a pair $j$ and $k$, then the corresponding columns of the left hand side of 
        \eqref{original-set-of-equations} would be linearly dependent.
    \end{proof}

    Armed with the above tools, we will now show that there is a one-to-one correspondence between the
    corner points $\Q$ and the vertices of the $\mathcal{H}$-polyhedron $P_\mathcal{H}$.
    We will then show that the vectors that generate the cone part of the $\mathcal{H}$-polyhedron
    correspond to the resource wasting vectors $\{\vec{e}_i\}$.

    \vspace{0.45cm}

    \myparagraph{Step 1: $\Q \subseteq vertices(P_\mathcal{H})$}
    We know from Lemma \ref{lemmaWithL} that every point $\qq_\pi \in \Q$ satisfies the $m$ equations
	\begin{align}	\label{special-set-of-equations}
        \sum_{m-i+1 \leq k \leq m} Q_{\pi\!(k)} &=   C_{\pi\![m-i+1, m]}, & 1\leq &i \leq m.
	\end{align}



	The equations \eqref{special-set-of-equations} are linearly independent since the left hand
	side is triangular, and have the form of the inequalites in \eqref{inner-bound-bisss}
	that are used to define $P_\mathcal{H}$. They have the unique solution:
	\begin{align}	\label{solution}
	 Q_{\pi\!(m)} &=  C_{\pi(m)}  &	
	 Q_{\pi\!(i)} &=   C_{\pi\![i, m]} - C_{\pi\![i+1, m]}, \qquad 1\leq i \leq m-1.
	\end{align}
	We need to show that this solution satisfies all the inequalities used to define
	$P_\mathcal{H}$ in \eqref{inner-bound-bisss}. We proceed by induction on $|\K|$.
	The case $|\K|=1$ follows from \eqref{solution} and the superadditivity property
	\eqref{lower-order}.
	For $|\K| \ge 2$ we can write $\K= \{\pi(i)\} \cup \K'$ for some 
	$\K' \subseteq \{ \pi(i+1), \pi(i+2),\ldots,\pi(m) \}$. Then
	\beas
	\sum_{k \in \K} Q_k 	&=& 	Q_{\pi(i)} + \sum_{k \in \K'} Q_k \\
							&\ge&  	C_{\pi\![i, m]} - C_{\pi\![i+1, m]} +  \sum_{k \in \K'} Q_k \\
							&\ge&  	C_{\pi\![i, m]} - C_{\pi\![i+1, m]} + C_{\K'} \qquad\qquad \textrm{(induction)} \\
							&\ge& 	C_{\K}
	\eeas
	where we again used superadditivity to get the last inequality.

    \vspace{0.45cm}

    \myparagraph{Step 2: $vertices(P_\mathcal{H}) \subseteq \Q$} In order to prove the opposite inclusion,
    we will show that every vertex of $P_\mathcal{H}$ is of the form of Lemma
    \ref{lemmaWithL}. More specifically, we want to prove the following proposition.
    \begin{proposition}[Existence of a maximal chain] \label{flag-existence}
        Every vertex of $P_\mathcal{H}$, that is, the intersection of $m$ linearly independent hyperplanes
        \begin{align}
            \qquad\qquad\qquad\qquad\qquad \sum_{k\in \cL_i} Q_k         &=  C_{\cL_i},       & 1\leq &i \leq m,\\[-3mm]
        \intertext{	defined by the family of sets $\{\cL_i; \, 1 \leq i \leq m\}$ can be 
        			described by an equivalent set of equations}
            \qquad\qquad\qquad\qquad\qquad \sum_{k\in \K_i} Q_k         &=  C_{\K_i},       & 1\leq &i \leq m,
        \end{align}
        for some family of sets distinct $\K_i \subseteq \{1,2,\ldots,m\}$ that form a \emph{maximal chain} 
        in the sense of
        \be
            \emptyset = \K_0 \subset \K_1 \subset \K_2 \subset \cdots \subset \K_{m-1} \subset \K_m = \{1,2,\ldots,m\}.
        \ee
    \end{proposition}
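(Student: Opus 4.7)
The plan is to work with the family $\mathcal T := \{S \subseteq \{1,\ldots,m\} : \sum_{k \in S} \bar Q_k = C_S\}$ of tight sets at the vertex $\bar Q$, and to show it contains a maximal chain $\emptyset = \K_0 \subset \K_1 \subset \cdots \subset \K_m = \{1,\ldots,m\}$. The corresponding equations will then form $m$ linearly independent saturated constraints with the same unique solution $\bar Q$ as the original system \eqref{original-set-of-equations}, yielding the desired reformulation.

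First I would establish three structural properties of $\mathcal T$. (i) By iterating Corollary \ref{meet-join-lemma}, $\mathcal T$ is closed under finite unions and intersections, and $\emptyset \in \mathcal T$ trivially (both sides of the defining equation vanish). (ii) The full set $\{1,\ldots,m\}$ lies in $\mathcal T$: because $\bar Q$ is the unique solution of \eqref{original-set-of-equations}, no coordinate $i$ can be missing from every $\cL_j$ (otherwise $\bar Q + t\vec e_i$ would also be a solution for any $t$), so $\cL_1 \cup \cdots \cup \cL_m = \{1,\ldots,m\}$, and this lies in $\mathcal T$ by closure. (iii) $\mathcal T$ separates points: Lemma \ref{no-co} gives this for the family $\{\cL_i\}$, hence a fortiori for the larger family $\mathcal T$.

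Next I would build the chain greedily. Having selected $\K_i \in \mathcal T$ with $|\K_i| = i < m$, let $\K_{i+1}$ be any element of $\mathcal T$ of minimum cardinality strictly containing $\K_i$; this exists by (ii). The key claim, and the step I expect to be the main obstacle, is that $|\K_{i+1}| = i+1$. Assuming for contradiction that $S := \K_{i+1}$ satisfies $|S| \geq i+2$, pick distinct $j, k \in S \setminus \K_i$; by (iii) some $\cL \in \mathcal T$ separates them, and we may relabel so that $j \in \cL$ and $k \notin \cL$. Set $T := S \cap (\cL \cup \K_i)$: closure under union and intersection gives $T \in \mathcal T$, while $\K_i \subseteq T$, $j \in T \setminus \K_i$ (so $T \supsetneq \K_i$), and $k \notin T$ (so $|T| < |S|$) contradict the minimality of $S$. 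Once the chain is constructed, arranging the equations in order of increasing cardinality produces a lower-triangular $\{0,1\}$ coefficient matrix with ones on the diagonal, hence a linearly independent system of saturated equations with the same unique solution $\bar Q$ as the original one.
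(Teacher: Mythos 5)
Your proof is correct, and it reaches the maximal chain by a genuinely different combinatorial route than the paper's. Both arguments rest on the same two ingredients --- Corollary \ref{meet-join-lemma}, which makes the family of tight sets closed under unions and intersections, and Lemma \ref{no-co}, which makes that family separate points --- but the paper converts separation into a directed graph on $\{1,\ldots,m\}$ (an edge $(j,k)$ whenever membership of $j$ in every saturated $\cL_i$ forces membership of $k$), deduces acyclicity from Lemma \ref{no-co}, topologically sorts, and takes the suffix sets of that order as the chain; it then needs a somewhat delicate induction to verify that each suffix set really is a union/intersection combination of the $\cL_i$'s and hence tight. You instead work directly in the lattice $\mathcal{T}$ of all tight sets and grow the chain greedily, with the exchange set $T=S\cap(\cL\cup\K_i)$ simultaneously certifying $T\in\mathcal{T}$, $T\supsetneq\K_i$, and $|T|<|S|$, so that minimality forces each step to add exactly one element. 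Your version is more local and arguably cleaner: the minimality-plus-exchange step replaces both the graph construction and the suffix-set induction, and your points (i)--(iii) isolate exactly what is needed (lattice closure, the full set being tight, and point separation). What the paper's version buys in exchange is an explicit description of the underlying ``forcing'' preorder on the ground set, which produces the entire chain at once from a single topological sort and connects more visibly to the standard polymatroid picture. Your closing observations --- that $\bar{Q}$ satisfies the chain equations because each $\K_i\in\mathcal{T}$, and that ordering variables so the $i$-th one is the unique element of $\K_i\setminus\K_{i-1}$ makes the system triangular and hence uniquely solvable --- correctly complete the equivalence claimed in the proposition.
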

    Since there exists a permutation $\pi$ such that $\forall i,\ \pi\![m-i+1, m] = \K_i$ this implies that
    all the vertices of $P_\mathcal{H}$ are in $\Q$. The main tool we have have at our disposal in order to prove
    this proposition is Corollary \ref{meet-join-lemma}, which we will use extensively.



    \begin{proof}[Proof of Proposition \ref{flag-existence}]

    Let $\{ \cL_i \}_{i=1}^{m}$ be the subsets of $\{1,2,\ldots,m\}$ for which the inequalities are saturated
    and define $\cL^\mathcal{S}_i := \cL_i \cap \mathcal{S}$, the intersection of $\cL_i$ with some set
    $\mathcal{S} \subseteq \{1,2,\ldots,m\}$.

    \noindent Construct the directed graph $G = (V, E)$, where:
    \begin{itemize}
        \item   $V = \{1,2,\ldots,m\}$, i.e. the vertices are the numbers from $1$ to $m$;

        \item   $E = \left\{(j, k) \ :\  (\forall i) \; j \in \cL_i \implies k \in \cL_i \ \right\}$,
                i.e. there is an edge from vertex $j$ to vertex $k$ if  whenever vertex $j$ occurs
                in the given subsets, then so does vertex $k$.
    \end{itemize}
    Now $G$ has to be acyclic by Lemma \ref{no-co}, so it has a topological sorted order. Let us call this order $\nu$.
    Let $\K_0 = \emptyset$ and let
    \be
        \K_l = \{\nu_{m-l+1}, \ldots, \nu_m \}
    \ee
    for $l \in \{ 1, \ldots , m \}$.
    The sets $\K_l$, which consist of the last $l$ vertices according to the ordering $\nu$,
    form a maximal chain $\K_0 \subset \K_1 \subset \cdots \subset \K_{m-1} \subset \K_m$
    by construction.

    We claim that all the sets $\K_l$ can be constructed from the sets $\{ \cL_i \}$ by using unions
    and intersections as dictated by Corollary \ref{meet-join-lemma}.
    The statement is true for $\K_m=\{1,2,\ldots,m\}$ because every variable must appear in some constraint
    equation, giving $\K_m = \cup_i \cL_i$.
    The statement is also true for $\K_{m-1}=\{\nu_2, ..., \nu_m\}$ since the vertex $\nu_1$ has no
    in-edges in $G$ by the definition of a topological sort, which means that
    \be \label{build-Km1}
      \K_{m-1} = \bigcup_{\nu_1 \notin \cL^{\K_m}_i} \cL^{\K_m}_i.
    \ee
    %
    %
    %
    For the induction statement, let $l \in \{m-1, \ldots, 2, 1\}$ and
    assume that $\K_l = \bigcup_{i} \cL^{\K_{l}}_i$. 
    Since the vertex $\nu_{m-l}$ has no in-edges in the induced subgraph generated by the vertices $\K_l$
    by the definition
    of the topological sort, $\K_{l-1}$ can be obtained from the union of all the sets not containing $\nu_{m-l}$:
    \be
        \K_{l-1} = \bigcup_{\nu_{m-l} \notin \cL^{\K_l}_i} \cL^{\K_l}_i.
    \ee
    In more detail, we claim that for all $\omega \neq \nu_{m-l} \in \K_{l-1}$ there exists $i$ such that
    $\nu_{m-l} \not\in \cL_i^{\K_l}$ and $\omega \in \cL_i^{\K_l}$. If it were not true, that would imply the existence
    of $\omega \neq \nu_{m-l} \in \K_{l-1}$ such that for all $i$, $\nu_{m-l} \in \cL_i^{\K_l}$ or
    $\omega \not\in \cL_i^{\K_l}$. This last condition implies that whenever $\omega \in \cL_i^{\K_l}$
    it is also true that $\nu_{m-l} \in \cL_i^{K_l}$,
    which corresponds to an edge $(\omega,\nu_{n-l})$ in the induced subgraph.
    \end{proof}

    We have shown that every vertex can be written in precisely the same form as Lemma \ref{lemmaWithL}
    and is therefore a point in $\Q$.
    This proves $vertices(P_\mathcal{H}) \subseteq \Q$, which together with the result of Step 1, implies
    $vertices(P_\mathcal{H}) = \Q$.

    \vspace{0.45cm}

    \myparagraph{Step 3: Cone Part}
    The final step is to find the set of direction vectors that correspond to the cone part of $P_\mathcal{H}$.
    The generating vectors of the cone are all vectors that satisfy the homogeneous versions of
    the halfspace inequalities \eqref{hpolyhedron}, which in our case gives
    \be
        \sum_{k \in \K} Q_k \geq 0
    \ee
    for all $\K \subset \{ 1, 2, \ldots, m \}$. These inequalities are satisfied if and only if $Q_k \geq 0$
    for all $k$. We can therefore conclude that the cone part of $P_\mathcal{H}$ is
    $cone(\vec{e}_1,\vec{e}_2,\ldots,\vec{e}_m)$.

    \vspace{0.45cm}

    This completes our demonstration that $P_\mathcal{V}$ is the $\mathcal{V}$-polyhedron description of the
    $\mathcal{H}$-polyhedron $P_\mathcal{H}$.
    Thus we arrive at the statement we were trying to prove; if the inequalities
    \be
        \sum_{k\in \K} Q_k
                    \geq    C_{\K}
                    =       \frac{1}{2} \left[ \sum_{k\in \K}\!H(A_k)_\ph  +  H(R)_\ph -  H(RA_{\K})_\ph \right]
    \ee
    are satisfied for any $\K \subseteq \{1,2,\ldots,m\}$, then the rate tuple $(Q_1,Q_2,\cdots,Q_m)$ is inside the rate
    region. This completes the proof of Theorem \ref{thm:THM1}.



An important discovery by Edmonds \cite{E69} is that optimizing a linear function over
a supermodular polyhedron can be done in an almost trivial manner by the greedy algorithm.
Indeed, let $c_1, c_2, ... , c_m$ be any given scalars, and suppose we wish to solve
the linear program:
\[
min~ \sum_{i=1}^m c_i Q_i~~~~~~~~ \textrm{for} ~~~ (Q_1, Q_2 , ... , Q_m ) \in  P_\mathcal{H}.
\]
Let $\pi$ be the permutation such that
\[
c_{\pi(1)} \ge c_{\pi(2)} \ge ... \ge c_{\pi(m)}. 
\]
Edmonds showed that \eqref{solution} gives an optimum solution to the above
linear program.
We note in passing that we have no idea how hard it is to optimize over the 
region described by Theorem  \ref{thm:THM2}.

\section{Multiparty Information} \label{sec:multiparty-information}
    In this section and the following we present some tools that we will need in order to prove the outer bound on the
    rate region stated in Theorem~\ref{thm:THM2}.
    The following quantity is one possible generalization of the mutual information $I(A;B)$ for multiple parties.

    \begin{definition}[Multiparty Information]  \label{Jformation}
        Given the state $\rho^{X_1X_2\ldots X_m}$ shared between $m$ systems, we define the multiparty information
        as the following quantity:
        \bea
            I(X_1;X_2; \cdots; X_m)_\rho    &:=& H(X_1) + H(X_2) + \cdots + H(X_m) - H(X_1X_2 \cdots X_m)
                                            \nonumber \\
                                            &=& \sum_{i=1}^m H(X_i) - H(X_1X_2 \cdots X_m)
        \eea
    \end{definition}

    The subadditivity inequality for quantum entropy ensures that the multiparty information
    is zero if and only if $\rho$ has the tensor product form
    $\rho^{X_1} \otimes \rho^{X_2} \otimes \cdots \otimes \rho^{X_m}$.
    The conditional version of the multiparty mutual information is obtained by replacing all the entropies by
    conditional entropies
    \bea
        I(X_1;X_2; \cdots; X_m|E)_\rho  &:=& \sum_{i=1}^m H(X_i|E) - H(X_1X_2 \cdots X_m|E) \nonumber \\
                                        &=&      \sum_{i=1}^m H(X_iE) - H(X_1X_2 \cdots X_mE) - (m-1)H(E) \nonumber \\
                                        &=&      I(X_1;X_2; \cdots; X_m;E) - \sum_{i=1}^m I(X_i;E).
    \eea
    This definition of multiparty information has appeared previously in \cite{Lindblad,RHoro,GPW05} and
    more recently in \cite{multisquash}, where many of its properties were investigated. 

%

    Next we investigate some formal properties of the multiparty information which will be
    useful in our later analysis.

    \begin{lemma}[Merging of multiparty information terms] \label{mergingJ}
        Arguments of the multiparty information can be combined by subtracting their mutual information
        \be
            I(A;B;X_1;X_2; \cdots;X_m) - I(A;B) = I(AB;X_1;X_2; \cdots ;X_m).
        \ee
    \end{lemma}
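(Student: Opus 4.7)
The plan is to verify the identity directly by expanding each multiparty information term using Definition~\ref{Jformation} and then noticing a clean cancellation. There is no real obstacle here: the statement is essentially a bookkeeping observation about how the entropies regroup when we treat $A$ and $B$ as a joint system $AB$.

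First, I would write out the left-hand side. By definition,
\[
    I(A;B;X_1;X_2;\cdots;X_m) = H(A) + H(B) + \sum_{i=1}^m H(X_i) - H(ABX_1X_2\cdots X_m),
\]
and $I(A;B) = H(A) + H(B) - H(AB)$. Subtracting, the single-system entropies $H(A)$ and $H(B)$ cancel, leaving
\[
    I(A;B;X_1;\cdots;X_m) - I(A;B) = H(AB) + \sum_{i=1}^m H(X_i) - H(ABX_1X_2\cdots X_m).
\]

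Next I would write out the right-hand side using the same definition, viewing $AB$ as a single system:
\[
    I(AB;X_1;X_2;\cdots;X_m) = H(AB) + \sum_{i=1}^m H(X_i) - H(ABX_1X_2\cdots X_m).
\]
These two expressions are identical, which completes the proof. The only subtle point worth remarking on is that the definition of multiparty information is insensitive to how we bracket a joint subsystem; merging two of its arguments simply replaces their marginal entropies $H(A)+H(B)$ by the joint entropy $H(AB)$, and the correction term $H(AB)-H(A)-H(B) = -I(A;B)$ is exactly what must be subtracted. No use of strong subadditivity or any other nontrivial entropic inequality is required, so the lemma holds for arbitrary (not necessarily pure) multipartite states.
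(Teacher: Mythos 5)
Your proof is correct and follows exactly the route the paper takes: expand both sides via Definition~\ref{Jformation} and observe that the terms cancel, with no entropic inequality needed. The paper simply states this as "a simple calculation," and your expansion fills in precisely those details.
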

    \begin{proof} This identity is a simple calculation. It is sufficient to expand the definitions and
    cancel terms.
    \end{proof}


    Discarding a subsystem inside the conditional multiparty information cannot lead it to increase.
    This property, more than any other, justifies its use as a measure of correlation.
    \begin{lemma}[Monotonicity of conditional multiparty information] \label{cond-monotonicity}
        \be
            I(AB;X_1;\cdots X_m| E) \geq I(A;X_1;\cdots X_m| E)
        \ee
    \end{lemma}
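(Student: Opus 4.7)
The plan is to reduce the claimed inequality to strong subadditivity by direct expansion of the definition of conditional multiparty information. Using the formula
\[
I(Y_1;\cdots;Y_m\mid E)_\rho = \sum_{i=1}^m H(Y_i\mid E) - H(Y_1\cdots Y_m\mid E),
\]
I would compute the difference
\[
I(AB;X_1;\cdots;X_m\mid E) - I(A;X_1;\cdots;X_m\mid E).
\]
The $\sum_i H(X_i\mid E)$ terms cancel, leaving
\[
\bigl[H(AB\mid E) - H(A\mid E)\bigr] - \bigl[H(ABX_1\cdots X_m\mid E) - H(AX_1\cdots X_m\mid E)\bigr],
\]
which is exactly $H(B\mid AE) - H(B\mid AX_1\cdots X_m E)$, i.e. the conditional mutual information $I(B;X_1\cdots X_m\mid AE)$.

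The final step is to invoke strong subadditivity of the von Neumann entropy, which guarantees $I(B;X_1\cdots X_m\mid AE)\geq 0$ for any quantum state, and therefore yields the desired inequality. Since the whole argument is a one-line expansion followed by a single application of SSA, there is no real obstacle; the only thing to be careful about is bookkeeping of the conditional-entropy terms so that the cancellations are transparent. An alternative, essentially equivalent, route would be to first establish a conditional analogue of Lemma~\ref{mergingJ} giving $I(AB;X_1;\cdots;X_m\mid E) = I(A;B;X_1;\cdots;X_m\mid E) + I(A;B\mid E)$ and then regroup; but the direct expansion above is shorter.
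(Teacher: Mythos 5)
Your proposal is correct and follows essentially the same route as the paper: both expand the definition, cancel the $\sum_i H(X_i|E)$ terms, and identify the remaining difference as the single strong-subadditivity instance $H(ABE) + H(AX_1\cdots X_m E) \geq H(AE) + H(ABX_1\cdots X_m E)$, which you phrase more transparently as $I(B;X_1\cdots X_m\mid AE)\geq 0$. The bookkeeping is right, so nothing is missing.
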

    \begin{proof}
        This follows easily from strong subadditivity of quantum entropy (SSA).
        \begin{align*}
            I(AB&;X_1;X_2; \ldots;X_m|E) =  \\
            = &\        H(ABE) + \sum_i H(X_iE) - H(ABX_1X_2 \ldots X_mE) -mH(E)    \nonumber   \\
            = &\        H(ABE) + \sum_i H(X_iE) - H(ABX_1X_2 \ldots X_mE) -mH(E)    + \nonumber \\
              &\ \quad  \underbrace{H(AE) - H(AE)}_{=0} \quad
                        + \quad \underbrace{H(AX_1X_2 \ldots X_mE) - H(AX_1X_2 \ldots X_mE)}_{=0} \nonumber \\
            = &\        H(AE) + \sum_i H(X_iE) - H(AX_1X_2 \ldots X_m) -mH(E)  + \nonumber \\
              &\ \quad \underbrace{\left[
                                        H(ABE) + H(AX_1X_2 \ldots X_mE) - H(AE) - H(ABX_1X_2 \ldots X_mE)
                                    \right]}_{\geq 0 \; \mbox{\tiny by SSA}} \nonumber \\
            \geq &\        H(AE) + \sum_i H(X_iE) - H(AX_1X_2 \ldots X_mE) -mH(E)  \\
            = &\        I(A;X_1;X_2 \ldots X_m|E)
        \end{align*}
    \end{proof}

    We will now prove a multiparty information property that follows from a more general chain rule,
    but is all that we will need for applications.

    \begin{lemma}[Chain-type Rule] \label{useful-chain-rule}
    \be
        I(AA';\XX|E) \geq I(A;\XX|A'E)
    \ee
    \end{lemma}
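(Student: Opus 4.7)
The plan is to give a direct proof by expanding both sides in terms of von Neumann entropies and then applying the entropy chain rule to convert the conditioning from $E$ to $A'E$ where needed. Concretely, the left-hand side expands as
$$I(AA';X_1;\ldots;X_m|E) = H(AA'|E) + \sum_{i=1}^m H(X_i|E) - H(AA'X_1\ldots X_m|E).$$
I would then apply the entropy chain rule in two places: $H(AA'|E) = H(A'|E) + H(A|A'E)$ and $H(AA'X_1\ldots X_m|E) = H(A'|E) + H(AX_1\ldots X_m|A'E)$. The two $H(A'|E)$ contributions cancel and the left-hand side collapses to $H(A|A'E) + \sum_i H(X_i|E) - H(AX_1\ldots X_m|A'E)$.

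On the right-hand side, the definition gives
$$I(A;X_1;\ldots;X_m|A'E) = H(A|A'E) + \sum_{i=1}^m H(X_i|A'E) - H(AX_1\ldots X_m|A'E),$$
so subtracting, the $H(A|A'E)$ and $H(AX_1\ldots X_m|A'E)$ terms cancel cleanly and the difference reduces to $\sum_{i=1}^m \bigl[H(X_i|E) - H(X_i|A'E)\bigr] = \sum_{i=1}^m I(X_i;A';E)$ in the conditional form, namely $\sum_{i=1}^m I(X_i;A'|E)$. Each summand is a conditional mutual information, hence nonnegative by strong subadditivity, which establishes the desired inequality.

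An equivalent route would be to invoke the conditional version of Lemma~\ref{mergingJ} to write $I(AA';X_1;\ldots;X_m|E) = I(A;A';X_1;\ldots;X_m|E) - I(A;A'|E)$, and then show by the same entropy manipulation that $I(A;A';X_1;\ldots;X_m|E) = I(A;X_1;\ldots;X_m|A'E) + I(A;A'|E) + \sum_i I(X_i;A'|E)$, so that the $I(A;A'|E)$ terms cancel upon subtraction. This is essentially the same calculation packaged through the merging identity.

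There is no genuine obstacle: the entire content is bookkeeping with the entropy chain rule, and the sole nontrivial input is the nonnegativity of conditional mutual information (equivalently, strong subadditivity), applied termwise to $I(X_i;A'|E)$. The only small point of care is to make sure that the $H(A'|E)$ term produced by the chain rule on $H(AA'|E)$ is matched by the one produced on $H(AA'X_1\ldots X_m|E)$ so that they cancel, which is what turns the inequality into a clean identity plus a manifestly nonnegative remainder.
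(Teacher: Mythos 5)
Your proof is correct and follows essentially the same route as the paper: both establish the identity $I(AA';X_1;\ldots;X_m|E) = I(A;X_1;\ldots;X_m|A'E) + \sum_{i=1}^m I(X_i;A'|E)$ (the paper writes the remainder as $\sum_i\left[H(A'E)+H(X_iE)-H(E)-H(A'X_iE)\right]$ in unconditional entropies) and then apply strong subadditivity termwise to the nonnegative remainder. The only difference is cosmetic bookkeeping—conditional entropies and the chain rule versus direct expansion—so no further comment is needed.
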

    \begin{proof}
    \begin{align*}
           I(AA'&;\XX|E)        = \\
            =&\ \       H(AA'E)+ \sum_{i=1}^m H(X_iE) - H(AA'\XcX) - mH(E)  \\
            =&\ \       I(A;\XX|A'E) + \sum_{i=1}^m \left[ H(A'E) + H(X_iE)- H(E) - H(A'X_iE)  \right] \nonumber \\
         \geq&\ \       I(A;\XX|A'E).
    \end{align*}
    The inequality is true by strong subadditivity.         
    \end{proof}

    \myparagraph{Remark} It is interesting to note that we have two very similar reduction-of-systems formulas derived
    from different perspectives. From Lemma \ref{cond-monotonicity} (monotonicity of the multiparty information)
    we have that
        \be
            I(AB;\XX|E) \geq I(A;\XX|E),
        \ee
    but we also know from Lemma \ref{useful-chain-rule} (chain-type rule) that
        \be
            I(AB;\XX|E) \geq I(A;\XX|BE).
        \ee
    The two expressions are inequivalent; one is not strictly stronger than the other.
    We use both of them depending on whether we want to keep the deleted system around for conditioning.

\section{Squashed Entanglement} \label{sec:squashed-entanglement}

    Using the definition of the conditional multiparty information from the previous
    section, we can define a multiparty squashed entanglement analogous to the bipartite version
    \cite{tucci-1999,tucci-2002,CW04}.
    The multiparty squashed entanglement has been investigated independently by Yang et al. \cite{multisquash}.
    For the convenience of the readers and authors alike, we will provide full proofs of all the \Esq properties
    relevant to the distributed compression problem. \\

    \begin{definition}[Multiparty Squashed Entanglement]  \label{squashedEntanglement}
        Consider the state $\rho^{X_1X_2\ldots X_m}$ shared by $m$ parties. We define the multiparty squashed
        entanglement in the following manner:
        \bea
            \Esq(X_1;X_2; \ldots; X_m)_\rho
                        &:=&        \frac{1}{2}\inf_E
                                        \left[
                                            \sum_{i=1}^m H(X_i|E)_\rhot - H(X_1X_2 \cdots X_m|E)_\rhot
                                        \right] \nonumber \\
                        &=&             \frac{1}{2}\inf_E  I(X_1;X_2;\cdots;X_m|E)_\rhot
        \eea
        where the infimum is taken over all states $\rhot^{X_1X_2\ldots X_mE}$ such that
        $\Tr_E\!\left(\rhot^{X_1X_2\ldots X_mE}\right) = \rho^{X_1X_2\ldots X_m}$. (We say $\rhot$
        is an \emph{extension} of $\rho$.)
    \end{definition}
    The dimension of the extension system $E$ can be arbitrarily large, which is in part what makes calculations
    of the squashed entanglement very difficult except for simple systems.
    The motivation behind this definition is that we can include a copy of all classical correlations inside
    the extension $E$ and thereby eliminate them from the multiparty information by conditioning.
    Since it is impossible to copy quantum information, we know that taking the infimum over all possible
    extensions $E$ we will be left with a measure of the purely quantum correlations.




    \myparagraph{Example:} It is illustrative to calculate the squashed entanglement for separable states, which
    are probabilistic mixtures of tensor products of local pure states.
    Consider the state \vspace{-0.15cm}
    \be
        \rho^{X_1X_2\ldots X_m} = \sum_j p_j        \samekb{\alpha_j}^{X_1} \otimes
                                                    \samekb{\beta_j}^{X_2} \otimes \cdots
                                                    \samekb{\zeta_j}^{X_m},                     \nonumber
                                                    \vspace{-0.3cm}
    \ee
    which we choose to extend by adding a system $E$ containing a record of the index $j$ as follows
    \be
        \rhot^{X_1X_2\ldots X_mE} = \sum_j p_j      \samekb{\alpha_j}^{X_1} \otimes
                                                    \samekb{\beta_j}^{X_2} \otimes \cdots
                                                    \samekb{\zeta_j}^{X_m}  \otimes
                                                    \samekb{j}^E.                   \nonumber
                                                    \vspace{-0.3cm}
    \ee
    When we calculate conditional entropies we notice that for any subset $\K \subseteq \{1,2,\ldots m\}$,
    \be
        H(X_\K | E)_\rhot = 0.
    \ee
    Knowledge of the classical index leaves us with a pure product state for which all the relevant entropies are zero.
    Therefore, separable states have zero squashed entanglement:
    \be
        \Esq(X_1;X_2; \ldots ;X_m)_\rho
                    = \frac{1}{2}\left[ \sum_i^m H(X_i|E)_\rhot -H(X_1X_2 \ldots X_m|E)_\rhot \right]
                    = 0.
    \ee


    We now turn our attention to the properties of $\Esq$.
    Earlier we showed that the squashed entanglement measures purely quantum contributions
    to the mutual information between systems, in the sense that it is zero for all separable states.
    In this section we will show that the multiparty squashed entanglement cannot increase under the action of
    local operations and classical communication, that is, that $\Esq$ is an LOCC-monotone.
    We will also show that \Esq has other desirable properties; it is convex, subadditive and continuous.


    \begin{proposition}
        The quantity $\Esq$ is an entanglement monotone, i.e. it does not increase on average 
        under local quantum operations
        and classical communication (LOCC). 
    \end{proposition}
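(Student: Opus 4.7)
The plan is to adapt the Christandl--Winter argument for bipartite squashed entanglement \cite{CW04} to the multiparty setting. An arbitrary LOCC protocol decomposes into rounds in each of which a single party applies a local quantum instrument and broadcasts the classical outcome, so by composition it suffices to establish monotonicity for a single such round. Fix party $i$ and suppose she applies an instrument $\{\mathcal{E}_j\}$ to $X_i$, producing outcome $j$ with probability $p_j$ and post-measurement joint state $\sigma_j$ on $X_1 \cdots X_i' \cdots X_m$; the target inequality is $\sum_j p_j \Esq(\sigma_j) \leq \Esq(\rho)$.

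The central device is a Stinespring dilation of the instrument to a local isometry $V : X_i \to X_i' K$, where $K$ is a classical register recording the outcome. Given any extension $\rhot^{X_1 \cdots X_m E}$ of $\rho$, applying $V$ locally produces an extension $\rhot'$ of the dilated state on $X_1 \cdots X_i' K \cdots X_m E$. Conditioning $\rhot'$ on $K = j$ yields a state $\rhot_j$ that is a valid extension of $\sigma_j$ with extension system $E$. Since $K$ is classical, conditional entropies average cleanly over it, so
\[
\sum_j p_j\, I(X_1;\ldots;X_i';\ldots;X_m \mid E)_{\rhot_j} \;=\; I(X_1;\ldots;X_i';\ldots;X_m \mid EK)_{\rhot'}.
\]
The chain-type rule (Lemma~\ref{useful-chain-rule}) will bound the right-hand side above by $I(X_1;\ldots;X_i' K;\ldots;X_m \mid E)_{\rhot'}$, which in turn equals $I(X_1;\ldots;X_i;\ldots;X_m \mid E)_{\rhot}$ because $V$ is a local isometry and entropies on the $i$-th tensor factor are preserved. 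Halving and taking the infimum over extensions of $\rho$ then yields $\sum_j p_j \Esq(\sigma_j) \leq \Esq(\rho)$.

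The main conceptual obstacle is that the infima defining $\Esq(\sigma_j)$ have no \emph{a priori} connection to any single extension of $\rho$. The argument above finesses this by producing, from a single extension $\rhot$ of $\rho$, a coherent extension of the entire post-measurement ensemble via the register $K$, thereby reducing the monotonicity inequality to a purely entropic statement dispatched by the chain-type rule together with invariance of entropy under the local isometry $V$. The remaining routine checks are that the instrument admits a Stinespring dilation localised on $X_i$, so that the extension system $E$ is left untouched, and that the extension dimension may be taken large enough to accommodate the dilation; both are immediate from the definitions.
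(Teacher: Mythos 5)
Your core argument is essentially the paper's own: the paper also dilates the unilocal instrument to an isometry with a classical outcome register, uses isometric invariance of the entropies, the chain-type rule (Lemma~\ref{useful-chain-rule}) to move the outcome register into the conditioning system, and the fact that conditioning on a classical register averages the conditional multiparty information over outcomes, before taking the infimum over extensions. Two points, however, need attention. First, your dilation $V : X_i \to X_i' K$ implements the instrument only when every $\E_j$ has a single Kraus operator; a general instrument requires an additional environment register (the paper's $X_1''$) that must be traced out to recover the post-measurement ensemble, and then the step ``equals $I(X_1;\ldots;X_i;\ldots;X_m|E)_{\rhot}$ because $V$ is a local isometry'' is no longer an equality. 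The missing ingredient is exactly Lemma~\ref{cond-monotonicity} (monotonicity of the conditional multiparty information under discarding a subsystem), which the paper invokes at step \eqref{refthree}; with that extra inequality inserted before your chain-rule step, your argument goes through for arbitrary instruments.

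Second, you prove non-increase on average only for protocols in which the full classical record of outcomes is retained, and compose round by round. The paper instead follows \cite{CW04,Vid00}: it proves the unilocal-instrument inequality \emph{and} convexity of $\Esq$, and the two together give the standard notion of an entanglement monotone, which also covers LOCC steps in which classical outcomes are discarded or coarse-grained (there the members of the output ensemble are mixtures of the fine-grained post-measurement states, and convexity is what lets the average pass through). If you intend the proposition in that standard sense, your proposal is missing the convexity half; it is proved in the paper by adjoining a classical flag $E'$ to a pair of extensions, and the same construction would complete your argument.
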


    \begin{proof}
    In order to show this we will follow the argument of \cite{CW04}, which in turn follows the approach
    described in \cite{Vid00}.  We will show that \Esq has the following two properties:
    \begin{enumerate}
        \item   Given any unilocal quantum instrument $\E_k$
                (a collection of completely positive maps such that $\sum_k\!\E_k$ is trace preserving \cite{DL70})
                and any quantum state $\rho^{X_1\ldots X_m}$, then
                \be
                    \Esq(X_1;X_2;\ldots X_m)_\rho \geq \sum_k p_k \Esq(X_1;X_2;\ldots X_m)_{\rhot_k}
                \ee
                where
                \be
                    p_k = \Tr\ \E_k(\rho^{X_1\ldots X_m})
                    \quad \textrm{and} \quad
                    \rhot_k^{X_1\ldots X_m}=\frac{1}{p_k}\E_k(\rho^{X_1\ldots X_m}).
                \ee
        \item   $\Esq$ is convex.
    \end{enumerate}


    Without loss of generality, we assume that $\E_k$ acts on the first system.
    We will implement the quantum instrument by appending to $X_1$
    environment systems $X_1'$ and $X_1''$ prepared in standard pure states,
    applying a unitary $U$ on $X_1X_1'X_1''$, and then tracing out over $X_1''$.
    We store $k$, the classical record of which $\E_k$ occurred, in the $X_1'$ system.
    More precisely, for any extension of $\rho^{X_1X_2\cdots X_m}$ to $X_1X_2\cdots X_mE$,
        \be
            \rho^{X_1X_2\ldots X_mE}    \mapsto
            \rhot^{X_1X_1'X_2\ldots X_mE} := \sum_k \ \E_k\!\! \otimes\!\!
                    I_E \!\left( \rho^{X_1X_2\ldots X_mE} \right)\otimes\ket{k}\bra{k}^{X_1'}.
        \ee
    The argument is then as follows:
    \bea
      \frac{1}{2} I(X_1;X_2;\ldots X_m|E)_\rho  &=&
                \frac{1}{2}I(X_1X_1'X_1'';X_2;\ldots ;X_m|E)_\rho \label{refone}    \\
        &=&     \frac{1}{2}I(X_1X_1'X_1'';X_2;\ldots ;X_m|E)_{\rhot}    \label{reftwo} \\
        &\geq&  \frac{1}{2}I(X_1X_1';X_2;\ldots ;X_m|E)_{\rhot} \label{refthree} \\
        &\geq&  \frac{1}{2}I(X_1;X_2;\ldots; X_m|EX_1')_{\rhot} \label{reffour} \\
        &=&     \frac{1}{2}\sum_k p_k I(X_1;X_2;\ldots ;X_m|E)_{\rhot_k}    \label{reffive} \\
        &\geq&  \sum_k p_k \Esq\left(X_1;X_2;\ldots ;X_m \right)_{\rhot_k} \label{refsix}
    \eea

    The equality \eqref{refone} is true because adding an uncorrelated ancilla does not change the entropy of the system.
    The transition $\rho \rightarrow \rhot$ is unitary and doesn't change entropic quantities so
    \eqref{reftwo} is true.
    For \eqref{refthree} we use the monotonicity of conditional multiparty information, Lemma \ref{cond-monotonicity}.
    In \eqref{reffour} we use the chain-type rule from Lemma \ref{useful-chain-rule}.
    In \eqref{reffive} we use the index information $k$ contained in $X_1'$.
    Finally, since $\Esq$ is the infimum over all extensions, it must be no more than the particular extension $E$,
    so \eqref{refsix} must be true.
    Now since the extension $E$ in \eqref{refone} was arbitrary, it follows that
    $\Esq({X_1;X_2;\ldots ;X_m})_\rho \geq \sum_k p_k \Esq\left(X_1;X_2;\ldots; X_m \right)_{\rhot_k}$ which completes
    the proof of Property 1.

    To show the convexity of $\Esq$, we again follow the same route as in \cite{CW04}.
    Consider the states $\rho^{X_1X_2\ldots X_m}$ and $\sigma^{X_1X_2\ldots X_m}$ and their
    extensions $\rhot^{X_1X_2\ldots X_mE}$ and $\sigmat^{X_1X_2\ldots X_mE}$ defined over the same system $E$.
    We can also define the weighted sum of the two states
    $\tau^{X_1X_2\ldots X_m} = \lambda\rho^{X_1X_2\ldots X_m} + (1-\lambda)\sigma^{X_1X_2\ldots X_m}$
    and the following valid extension:
    \be
        \tilde{\tau}^{X_1X_2\ldots X_mEE'} = \lambda\rho^{X_1X_2\ldots X_mE}\otimes\samekb{0}^{E'}
                                            + (1-\lambda)\sigma^{X_1X_2\ldots X_mE}\otimes\samekb{1}^{E'}.
    \ee
    Using the definition of squashed entanglement we know that
    \begin{align}
        \Esq(X_1;X_2;\ldots; X_m)_\tau
                            &\leq\      \frac{1}{2} I(X_1;X_2;\ldots; X_m|EE')_{\tilde{\tau}}   \nonumber \\
                            &=\         \frac{1}{2}\left[ \lambda I(X_1;X_2;\ldots ;X_m|E)_\rhot
                                            + (1-\lambda)I(X_1;X_2;\ldots ;X_m|E)_\sigmat \right].  \nonumber
    \end{align}
    Since the extension system $E$ is completely arbitrary we have
    \be
        \Esq(X_1;X_2;\ldots; X_m)_\tau
        \leq
        \lambda\Esq(X_1;X_2;\ldots; X_m)_\rho + (1-\lambda)\Esq(X_1;X_2;\ldots; X_m)_\sigma,
    \ee
    so \Esq is convex.

    We have shown that \Esq satisfies both Properties 1 and 2. Therefore, it must be an entanglement
    monotone.
    \end{proof}

\myparagraph{Subadditivity on Product States}
    Another desirable property for measures of entanglement is that they should be additive or at least subadditive on
    tensor products of the same state. Subadditivity of \Esq is easily shown from the properties of
    multiparty information.

    \begin{proposition}     \label{subadditiveTensorProducts}
    $\Esq$ is subadditive on tensor product states, i.e.
        \be \label{subadditivity}
            \Esq\left( \rho^{X_1Y_1;X_2Y_2;\ldots;X_mY_m} \right) \leq
                    \Esq\left( \rho^{X_1;X_2;\ldots;X_m} \right)
                    +  \Esq\left( \rho^{Y_1;Y_2;\ldots;Y_m} \right)
        \ee
        where $\rho^{X_1Y_1X_2Y_2\ldots X_mY_m}=\rho^{X_1X_2\ldots X_m}\otimes\rho^{Y_1Y_2\ldots Y_m}$.
    \end{proposition}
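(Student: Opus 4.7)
The plan is to use the standard trick for proving subadditivity of squashed-type entanglement measures: build an extension of the joint state from extensions of the two factor states, then exploit the fact that conditional entropies split over tensor products.

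First I would fix arbitrary extensions $\rhot^{X_1\cdots X_m E_X}$ of $\rho^{X_1\cdots X_m}$ and $\rhot^{Y_1\cdots Y_m E_Y}$ of $\rho^{Y_1\cdots Y_m}$, and form the candidate extension
\[
    \rhot^{X_1Y_1\cdots X_mY_m E_X E_Y} \;:=\; \rhot^{X_1\cdots X_m E_X}\otimes\rhot^{Y_1\cdots Y_m E_Y}
\]
of the product state $\rho^{X_1X_2\ldots X_m}\otimes\rho^{Y_1Y_2\ldots Y_m}$, treating $E=E_XE_Y$ as the extension system.

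Next I would check that on this product extension the conditional multiparty information splits additively. Since $\rhot^{XYE_XE_Y}$ is a tensor product across the $X$-$E_X$ and $Y$-$E_Y$ cuts, each conditional entropy satisfies $H(X_iY_i|E_XE_Y)_\rhot = H(X_i|E_X)_\rhot + H(Y_i|E_Y)_\rhot$, and likewise for the joint entropy $H(X_1Y_1\cdots X_mY_m|E_XE_Y)_\rhot$. Substituting into the definition of $I(\,\cdot\,;\cdots;\,\cdot\,|E)$ yields
\[
    I(X_1Y_1;\ldots;X_mY_m|E_XE_Y)_\rhot \;=\; I(X_1;\ldots;X_m|E_X)_\rhot + I(Y_1;\ldots;Y_m|E_Y)_\rhot.
\]

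Finally, since $\Esq$ of the product state is an infimum over \emph{all} extensions (not just product ones), the left-hand side is upper-bounded by $\tfrac{1}{2}$ times the displayed sum. Taking the infimum separately over the two extensions $E_X$ and $E_Y$ on the right-hand side then yields exactly the claimed subadditivity inequality \eqref{subadditivity}. There is no real obstacle here: the only thing to verify carefully is the splitting of the conditional entropies on the product extension, which follows from the standard identity $H(AB)_{\sigma\otimes\tau}=H(A)_\sigma+H(B)_\tau$ applied to each term in the definition of the multiparty information.
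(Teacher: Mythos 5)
Your proposal is correct and follows essentially the same route as the paper: take arbitrary extensions of the two factors, tensor them to get a (non-generic) extension of the product state, use additivity of entropies across the tensor cut to split the conditional multiparty information, and then invoke the infimum over extensions on both sides. No gaps worth noting.
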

    \begin{proof}
    Assume that $\rho^{X_1X_2\ldots X_mE}$ and $\rho^{Y_1Y_2\ldots Y_mE'}$ are extensions. Together they form an
    extension $\rho^{X_1Y_1X_2Y_2\ldots X_mY_mEE'}$ for the product state.
    \begin{align}
        2\Esq\big(X_1Y_1;&X_2Y_2;\ldots;X_mY_m\big)_\rho \nonumber \\
            &\leq\      I(X_1Y_1;X_2Y_2;\ldots;X_mY_m|EE')  \\
            &=\         \sum_i H(X_iY_iEE') - H(X_1Y_1X_2Y_2\ldots X_mY_mEE')
                                    - (m-1)H(EE') \\
            &=\         I(X_1;X_2;\ldots;X_m|E) + I(Y_1;Y_2;\ldots;Y_m|E').
    \end{align}
    The first line holds because the extension for the $XY$ system that can be built by combining
    the $X$ and $Y$ extensions is not the most general extension.
    The proposition then follows because the inequality holds for all extensions of $\rho$ and $\sigma$.
    \end{proof}

    The question of whether \Esq is additive, meaning superadditive in addition to subadditive, remains an open problem.
    Indeed, if it were possible to show that correlation between the $X$ and $Y$ extensions is unnecessary
    in the evaluation of the squashed entanglement of $\rho \otimes \sigma$, then \Esq would be additive.
    This is provably true in the bipartite case~\cite{CW04} but the same method does not seem to work
    with three or more parties.


\myparagraph{Continuity}
    The continuity of bipartite \Esq was conjectured in \cite{CW04} and proved by Alicki and Fannes in \cite{AF04}.
    We will follow the same argument here to prove the continuity of the multiparty squashed entanglement.
    The key to the continuity proof is the following lemma which makes use of an ingenious geometric construction.

    \begin{lemma}[Continuity of conditional entropy \cite{AF04}] \label{continuityOfConditional}
    Given density matrices $\rho^{AB}$ and $\sigma^{AB}$ on the space $\calH^A \otimes \calH^B$ such that
        \be
            \| \rho - \sigma \|_1 = \frac{1}{2}\Tr | \rho - \sigma | \leq \epsilon,
        \ee
    it is true that
        \be
            \left| H(A|B)_\rho - H(A|B)_\sigma \right| \leq 4\epsilon \log d_A + 2h(\epsilon)
        \ee
    where $d_A=\dim \calH^A$ and $h(\epsilon)=-\epsilon\log\epsilon - (1-\epsilon)\log(1-\epsilon)$ is the
    binary entropy.
    \end{lemma}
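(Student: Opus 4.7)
The plan is to use the Alicki-Fannes trick: build two classical-quantum extensions of $\rho$ and $\sigma$ that share the same $AB$-marginal, so that comparison of $H(A|B)_\rho$ and $H(A|B)_\sigma$ reduces to bounding a conditional mutual information with respect to a small classical flag.

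First, I would take the Jordan decomposition $\rho-\sigma = X_+ - X_-$ with $X_\pm \geq 0$ orthogonally supported, noting that $\Tr X_+ = \Tr X_- =: \epsilon'$ because $\rho$ and $\sigma$ have the same trace, and that $\epsilon' \leq \epsilon$ under the normalization $\|\rho-\sigma\|_1 = \frac{1}{2}\Tr|\rho-\sigma|$ used in the lemma. Let $\delta_\pm = X_\pm/\epsilon'$. The crucial identity $\rho+X_- = \sigma+X_+$ shows that the two convex combinations
\[
\omega = \tfrac{1}{1+\epsilon'}\rho + \tfrac{\epsilon'}{1+\epsilon'}\delta_- = \tfrac{1}{1+\epsilon'}\sigma + \tfrac{\epsilon'}{1+\epsilon'}\delta_+
\]
yield the same state on $AB$. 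I would then adjoin a classical flag register $F$ on $\mathbb{C}^2$ and define
\[
\Omega_\rho = \tfrac{1}{1+\epsilon'}\rho\otimes\proj{0}_F + \tfrac{\epsilon'}{1+\epsilon'}\delta_-\otimes\proj{1}_F,
\]
with $\Omega_\sigma$ given by the analogous formula using $\sigma$ and $\delta_+$. By construction $\Omega_\rho^{AB} = \Omega_\sigma^{AB} = \omega$, so $H(A|B)_{\Omega_\rho} = H(A|B)_{\Omega_\sigma}$, and because $F$ is classical,
\[
H(A|BF)_{\Omega_\rho} = \tfrac{1}{1+\epsilon'}H(A|B)_\rho + \tfrac{\epsilon'}{1+\epsilon'}H(A|B)_{\delta_-},
\]
with an analogous decomposition for $\Omega_\sigma$.

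The rest of the argument is to subtract these two decompositions, apply $|H(A|B)_\Omega - H(A|BF)_\Omega| \leq I(A;F|B)_\Omega \leq H(F)_\Omega \leq h\!\bigl(\epsilon'/(1+\epsilon')\bigr)$ on each of $\Omega_\rho$ and $\Omega_\sigma$, and bound $|H(A|B)_{\delta_+} - H(A|B)_{\delta_-}| \leq 2\log d_A$ using the standard range of conditional entropy. This produces an estimate of the form $|H(A|B)_\rho - H(A|B)_\sigma| \leq 2\epsilon'\log d_A + 2(1+\epsilon')h\!\bigl(\epsilon'/(1+\epsilon')\bigr)$, which simplifies to the stated inequality after elementary monotonicity/concavity bounds on the binary entropy and the use of $\epsilon' \leq \epsilon$. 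The main obstacle I anticipate is tuning these last constants so the final form matches $4\epsilon\log d_A + 2h(\epsilon)$ exactly: the structural content — Jordan decomposition, classical flag, and the conditional mutual information bound — is routine once set up, but the numerical constants are sensitive both to the $\|\cdot\|_1$ convention and to the choice of elementary estimates, so hitting the advertised factor of $4$ likely requires either a symmetric variant of the extension (building $\Omega$ from both $\delta_+$ and $\delta_-$ simultaneously) or a slightly looser monotonicity bound before the final simplification.
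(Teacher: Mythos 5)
The paper does not actually prove this lemma: it is imported verbatim from Alicki and Fannes \cite{AF04}, and the ``ingenious geometric construction'' mentioned in the surrounding text is precisely the common convex combination $\omega$ that you build. So your proposal is a reconstruction of the cited proof rather than a departure from anything in the paper, and its skeleton is sound: the Jordan decomposition with $\Tr X_+=\Tr X_-=\epsilon'\le\epsilon$ under the lemma's normalization, the identity $\rho+X_-=\sigma+X_+$, the classical flag $F$, the decomposition of $H(A|BF)$ for a classical--quantum state, and the bound $0\le I(A;F|B)\le H(F)$ are all correct.

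The step that does not go through as written is the endgame, and your instinct that the constants are the danger point is right. Two fixes are needed. First, you should not pay the $h$ correction twice: because $\Omega_\rho^{AB}=\Omega_\sigma^{AB}$, subtracting the two decompositions gives $H(A|BF)_{\Omega_\rho}-H(A|BF)_{\Omega_\sigma}=I(A;F|B)_{\Omega_\sigma}-I(A;F|B)_{\Omega_\rho}$, which is a difference of two quantities each lying in $[0,h(\epsilon'/(1+\epsilon'))]$, hence bounded in absolute value by a \emph{single} $h(\epsilon'/(1+\epsilon'))$; with the factor $2(1+\epsilon')h(\cdot)$ you wrote, the final bound fails for $d_A=2$. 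Second, ``elementary monotonicity'' is not enough to finish, since $(1+t)\,h\!\left(t/(1+t)\right)=-t\log t+(1+t)\log(1+t)>h(t)$ and $h$ is not monotone on $[0,1]$. The clean closure is: for $\epsilon\ge 1/2$ the claim is trivial because $|H(A|B)_\rho-H(A|B)_\sigma|\le 2\log d_A\le 4\epsilon\log d_A$; for $\epsilon<1/2$ use the identity $(1+t)\,h\!\left(t/(1+t)\right)-h(t)=(1+t)\log(1+t)+(1-t)\log(1-t)\le 2t$, together with $2\epsilon'\le 2\epsilon'\log d_A$ for $d_A\ge 2$ and monotonicity of $h$ on $[0,1/2]$, to get $2\epsilon'\log d_A+(1+\epsilon')h(\epsilon'/(1+\epsilon'))\le 4\epsilon'\log d_A+h(\epsilon')\le 4\epsilon\log d_A+2h(\epsilon)$. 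With those two adjustments your argument is complete and coincides with the proof of \cite{AF04}.
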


    This seemingly innocuous technical lemma makes it possible to prove the continuity of \Esq in spite
    of the unbounded dimension of the extension system.
    \begin{proposition}[\Esq is continuous]
        For all $\rho^{X_1X_2\ldots X_m}$, $\sigma^{X_1X_2\ldots X_m}$  with \mbox{$\| \rho
        - \sigma \|_1 \leq \epsilon$},\\
        \mbox{$\| \Esq(\rho)-\Esq(\sigma) \| \leq \epsilon'$} where $\epsilon'$ depends on $\epsilon$ and vanishes as
        $\epsilon \rightarrow 0$.
    \end{proposition}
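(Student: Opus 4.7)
The plan is to mimic the Alicki--Fannes argument used in \cite{CW04} for bipartite squashed entanglement. By symmetry it suffices to fix $\delta > 0$ and show $\Esq(\rho) \le \Esq(\sigma) + \epsilon'(\epsilon) + \delta$; the reverse inequality then follows from the same argument with $\rho$ and $\sigma$ interchanged. First I would choose a near-optimal extension $\tilde{\sigma}^{X_1\cdots X_m E}$ of $\sigma$ with $\tfrac{1}{2}I(X_1;\cdots;X_m | E)_{\tilde{\sigma}} \le \Esq(\sigma) + \delta$. The plan is then to manufacture an extension $\tilde{\rho}^{X_1\cdots X_m E}$ of $\rho$ that is close in trace norm to $\tilde{\sigma}$ on the same system $E$.

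To build $\tilde{\rho}$, take any purification $|\tilde{\sigma}\rangle^{X_1\cdots X_m EF}$ on an auxiliary system $F$. Fuchs--van de Graaf gives $F(\rho,\sigma) \ge 1 - \tfrac{1}{2}\|\rho - \sigma\|_1 \ge 1 - O(\epsilon)$, and Uhlmann's theorem then produces a purification $|\tilde{\rho}\rangle^{X_1\cdots X_m EF}$ of $\rho$ with $|\langle \tilde{\rho} | \tilde{\sigma}\rangle|^2 \ge 1 - O(\epsilon)$. Converting this fidelity bound back to trace distance and using the contractivity of the partial trace over $F$ yields $\|\tilde{\rho}^{X_1\cdots X_m E} - \tilde{\sigma}^{X_1\cdots X_m E}\|_1 \le O(\sqrt{\epsilon})$. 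Crucially, $\tilde{\rho}$ so defined is a valid extension of $\rho$, so it is admissible in the infimum defining $\Esq(\rho)$.

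Next I would expand
$I(X_1;\cdots;X_m|E) = \sum_{i=1}^{m} H(X_i|E) - H(X_1\cdots X_m|E)$
and apply the Alicki--Fannes continuity lemma (Lemma \ref{continuityOfConditional}) term by term to the pair $\tilde{\rho}^{X_1\cdots X_m E}$, $\tilde{\sigma}^{X_1\cdots X_m E}$. The decisive feature is that each bound depends only on $\log d_{X_i}$ (or $\log\prod_i d_{X_i}$ for the joint term), and not on $\dim E$, so the unbounded dimension of the extension system causes no harm. Summing the $m+1$ contributions gives $|I(X_1;\cdots;X_m|E)_{\tilde{\rho}} - I(X_1;\cdots;X_m|E)_{\tilde{\sigma}}| \le c\sqrt{\epsilon}\,\sum_i \log d_{X_i} + (m+1)\,h\bigl(c'\sqrt{\epsilon}\bigr)$ for absolute constants $c, c'$. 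Since $\tilde{\rho}$ is a valid extension of $\rho$, the left-hand side bounds $2\Esq(\rho) - 2\Esq(\sigma) - 2\delta$ from above, and letting $\delta \to 0$ produces the promised $\epsilon' = \epsilon'(\epsilon)$ which vanishes as $\epsilon \to 0$.

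The main obstacle in this plan is precisely the one that makes continuity of squashed entanglement nontrivial: the extension system $E$ in the infimum has a priori unbounded dimension, so a naive Fannes-type bound is useless. The Alicki--Fannes lemma circumvents exactly this by giving a bound depending only on the fixed dimensions $d_{X_i}$; everything else in the proposal is a routine Uhlmann-style transfer of a near-optimal extension from one state to the other.
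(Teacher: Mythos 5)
Your proposal is correct and follows essentially the same route as the paper: transfer an extension from one state to the other via Uhlmann's theorem and contractivity (the paper does this by pushing an arbitrary channel $\Lambda^{R\to E}$ through Uhlmann-matched purifications, you by purifying a near-optimal extension of $\sigma$ and invoking Uhlmann there, which is the same idea), then apply the Alicki--Fannes conditional-entropy lemma term by term, whose dimension dependence on $d_{X_i}$ alone is exactly what defeats the unbounded extension system. The only loose end is the harmless bookkeeping point that the purifying system $EF$ must be taken large enough for $\rho$ to admit a purification on it, and the constant in your Fuchs--van de Graaf step shifts slightly with the squared-fidelity convention, neither of which affects the conclusion.
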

    The precise form of $\epsilon'$ can be found in equation \eqref{epsilonprime}.

    \begin{proof}
        Proximity in trace distance implies proximity in fidelity distance \cite{Fuchs}, in the sense that
        \be
            F(\rho^{X_1X_2\ldots X_m}, \sigma^{X_1X_2\ldots X_m})   \geq        1 -\epsilon,
        \ee
        but by Uhlmann's theorem \cite{U76} this means that we can find purifications $\ket{\rho}^{X_1X_2\ldots X_mR}$
        and  $\ket{\sigma}^{X_1X_2\ldots X_mR}$ such that
        \be
            F(\ket{\rho}^{X_1X_2\ldots X_mR}, \ket{\sigma}^{X_1X_2\ldots X_mR})     \geq        1 -\epsilon.
        \ee
        Now if we imagine some general operation $\Lambda$ that acts only on the purifying system $R$
        \bea
            \rho^{X_1X_2\ldots X_mE}    &=&     (I^{X_1X_2\ldots X_m}\otimes\Lambda^{R \rightarrow E})
                                                \samekb{\rho}^{X_1X_2\ldots X_mR}   \\
            \sigma^{X_1X_2\ldots X_mE}  &=&     (I^{X_1X_2\ldots X_m}\otimes\Lambda^{R \rightarrow E})
                                                \samekb{\sigma}^{X_1X_2\ldots X_mR}
        \eea
        we have from the monotonicity of fidelity for quantum channels that
        \be
            F({\rho}^{X_1X_2\ldots X_mE}, {\sigma}^{X_1X_2\ldots X_mE})
                \geq F(\ket{\rho}^{X_1X_2\ldots X_mR}, \ket{\sigma}^{X_1X_2\ldots X_mR})
                    \geq        1 -\epsilon,
        \ee
        which in turn implies \cite{Fuchs} that
        \be
            \| \rho^{X_1X_2\ldots X_mE} - \sigma^{X_1X_2\ldots X_mE} \|_1   \leq    2\sqrt{\epsilon}.
        \ee
        Now we can apply Lemma \ref{continuityOfConditional} to each term in the multiparty information to obtain
        \begin{align}
        \Big| I(X_1&;X_2;\ldots X_m|E)_\rho - I(X_1;X_2;\ldots X_m|E)_\sigma \Big| \nonumber \\
            &\leq\  \sum_{i=1}^m    \Big| H(X_i|E)_\rho - H(X_i|E)_\sigma \Big|
                    + \Big| H(X_1X_2\ldots X_m|E)_\rho
                    - H(X_1X_2\ldots X_m|E)_\sigma \Big| \nonumber \\
            &\leq\  \sum_{i=1}^m \left[ 8\sqrt{\epsilon} \log d_i + 2h( 2\sqrt{\epsilon}) \right]
                    + 8\sqrt{\epsilon} \log\left( \prod_{i=1}^m d_i \right)
                    + 2h( 2\sqrt{\epsilon} ) \nonumber \\
            &=\     16\sqrt{\epsilon}\log\left( \prod_{i=1}^m d_i \right)
                    + (m+1)2h( 2\sqrt{\epsilon} )
                    =: \epsilon' \label{epsilonprime}
        \end{align}
        where $d_i=\dim {\calH}^{X_i}$ and $h(.)$ is as defined in Lemma \ref{continuityOfConditional}.
        Since we have shown the above inequalities for \emph{any} extension $E$ and the quantity $\epsilon'$ vanishes
        as $\epsilon \rightarrow 0$, we have proved that \Esq is continuous.
    \end{proof}

\section{Proof of outer bound on the rate region} \label{sec:THMIIproof}    

    Armed with the new tools of multiparty information and squashed entanglement, we
    are now ready to give the proof of Theorem \ref{thm:THM2}.
    We want to show that \emph{any} distributed compression protocol which works must satisfy all of the inequalities
    of type \eqref{outer-bound} from Theorem \ref{thm:THM2}.
    We break the proof into three steps.

    \myparagraph{Step 1: Decoupling Formula}
    We know that the input system $\ket{\psi}^{A^nR^n}$ is a pure state.  
    If we account for the Stinespring dilations of each encoding and decoding operation, then we can view any protocol
    as implemented by unitary transformations with ancilla and waste.
    Therefore, the output state (including the waste systems) should also be pure.

    More specifically, the encoding operations are modeled by CPTP maps $E_i$ with outputs $C_i$ of dimension $2^{nQ_i}$.
    In our analysis we will keep the Stinespring dilations of the CPTP maps $W_i$ so the evolution as a whole will
    be unitary.
    \[
    \Qcircuit @C=1em @R=.7em {
       \lstick{A_i}      & \multigate{1}{E_i} & \qw & \rstick{C_i \quad  \leftarrow\text{to Charlie}} \qw \\
       \lstick{ \ket{0}} & \ghost{E_i}        & \qw & \rstick{W_i \quad  \! \leftarrow\text{waste}} \qw
    }
    \]

    Once Charlie receives the systems that were sent to him, he will apply a decoding CPTP map $D$ with output system
    $\widehat{A}=\widehat{A}_1\widehat{A}_2 \ldots \widehat{A}_m$ isomorphic to the original $A=A_1A_2\ldots A_m$.
    \[
    \Qcircuit @C=1em @R=.7em {
       \lstick{\bigcup_{i}^m C_i}           & \multigate{1}{D}      & \qw &
            \rstick{\widehat{A}_1\cdots\widehat{A}_m \quad \leftarrow\text{near-purification of $R$}} \qw \\
       \lstick{ \ket{0}}    & \ghost{D}   & \qw & \rstick{W_C \qquad\qquad \!\!\!\leftarrow\text{Charlie's waste}} \qw
    }
    \]

    In what follows we will use Figure \ref{fig:mpFQSW} extensively in order to keep track of the evolution and purity of
    the states at various points in the protocol.

    \begin{figure}[ht]  \begin{center}
        \input{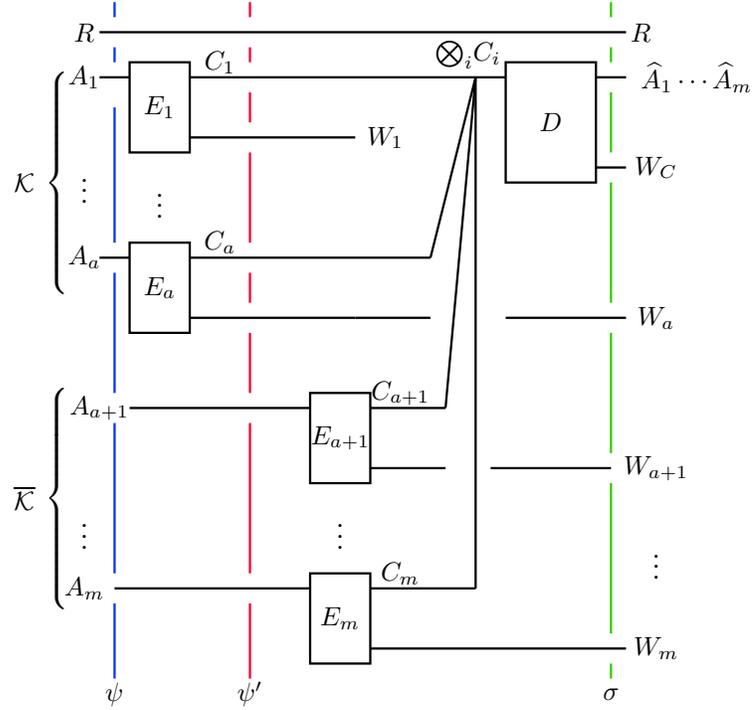} \end{center}
        \caption{
            A general distributed compression circuit diagram showing the encoding operations $E_i$ with
            output systems $C_i$ (compressed data) and $W_i$ (waste). The decoding operation takes all
            the compressed data $\bigotimes_i\!C_i$ and applies the decoding operation $D$ to output a state
            ${\sigma}^{\widehat{A}^nR^n}$ which has high fidelity with the original $\ket{\psi}^{A^nR^n}$. }
        \label{fig:mpFQSW}
    \end{figure}

    The starting point of our argument is the fidelity condition (\ref{rate-region}) for successful distributed
    compression, which we restate below for convenience
    \be
        F\left( \ket{\psi}^{A^nR^n}\!,\ {\sigma}^{\widehat{A}^nR^n} \right)     \geq    1 - \epsilon
    \ee
    where $\ket{\psi}^{A^nR^n} = \left( \ket{\ph}^{A_1A_2\cdots A_mR} \right)^{\otimes n}$ 
    is the input state to the protocol and $\sigma^{\widehat{A}^nR^n}$ 
    is the output state of the protocol.
    Since $\sigma^{\widehat{A}^nR^n}$ has high fidelity with a rank one state, it must have one large eigenvalue
    \be
        \lambda_{\rm max}(\sigma^{\widehat{A}^nR^n}) \geq 1 - \epsilon.
    \ee
    Therefore, the full output state $\ket{\sigma}^{\widehat{A}R^nW_1\!\cdots W_m W_C}$ has Schmidt decomposition
    of the form
    \be
        \ket{\sigma}^{\widehat{A}^nR^nW_1\!\cdots W_m W_C}
            =
                \sum_i \sqrt{\lambda_i} \ket{e_i}^{\widehat{A}^nR^n} \!\!\otimes \ket{f_i}^{W_1\!\cdots W_m W_C},
    \ee
    where $\ket{e_i},\ket{f_i}$ are orthonormal bases and $\lambda_1=\lambda_{\rm max} \geq 1 - \epsilon$.

    Next we show that the output state $\ket{\sigma}^{\widehat{A}^nR^nW_1\!\cdots W_m W_C}$
    is very close in fidelity to a totally decoupled state $\sigma^{\widehat{A}^nR^n}\otimes\sigma^{W_1\cdots W_mW_C}$,
    which is a tensor product of the marginals of $\ket{\sigma}$ on the subsystems
    ${\widehat{A}^nR^n}$ and ${W_1\cdots W_mW_C}$:
    \begin{align}
        F\big(\ket{\sigma}^{\widehat{A}^nR^nW_1\!\cdots W_m W_C}&, \
            \sigma^{\widehat{A}^nR^n}\otimes\sigma^{W_1\cdots W_mW_C} \big)  = \nonumber \\
            &=\ \Tr\left[ \ket{\sigma}\!\!\bra{\sigma}^{\widehat{A}^nR^nW_1\!\cdots W_m W_C}\left(
                \sigma^{\widehat{A}^nR^n}\otimes\sigma^{W_1\cdots W_mW_C}
              \right)\right] \nonumber \\
            &=\ \sum_{i}    \lambda^3_i  \geq\  (1-\epsilon)^3 \geq 1 - 3\epsilon. \label{fidelity_close}
    \end{align}
    Using the relationship between fidelity and trace distance \cite{Fuchs}, we can transform \eqref{fidelity_close}
    into the trace distance bound
    \be
        \left\| \ket{\sigma}\!\!\bra{\sigma}^{\widehat{A}^nR^nW_1\!\cdots W_m W_C} -
                    \sigma^{\widehat{A}^nR^n} \otimes \sigma^{W_1\!\cdots W_m W_C} \right\|_1 \leq 2\sqrt{3\epsilon}.
    \ee
    By the contractivity of trace distance, the same equation must be true for any subset of the systems.
    This bound combined with the Fannes inequality implies that the entropies taken with respect to
    the output state are nearly additive:
    \bea
        \big\vert H(R^n W_\K)_\sigma    \ -\  H(R^n)_\sigma + H(W_\K)_\sigma    \big\vert
        &\leq &     2\sqrt{3\epsilon} \log(d_{R^n}d_{W_\K}) + \eta(2\sqrt{3\epsilon}) \nonumber \\
        &\leq &     2\sqrt{3\epsilon} \log(d_{A^n}d_{A^{2n}_\K}) + \eta(2\sqrt{3\epsilon}) \nonumber \\
        &\leq &     2\sqrt{3\epsilon}\ n \log(d^3_{A}) + \eta(2\sqrt{3\epsilon}) = f_1(\epsilon,n).
                     \label{wastereference}
    \eea
    for any subset $\K \subseteq \{ 1,2\ldots m \}$ with $\epsilon \leq \frac{1}{12e^2}$ and $\eta(x)=-x\log x$.
    In the second line we have used the fact that $d_A = d_R$ and exploited the fact that
    $d_{W_\K}$ can be taken less than or equal to $d_{A^{2n}_\K}$, the maximum size of an environment
    required for a quantum operation with inputs and outputs of dimension no larger than $d_{A^{n}_\K}$.

    \ \\

    \myparagraph{Step 2: Dimension Counting}
    The entropy of any system is bounded above by the logarithm of its dimension.
    In the case of the systems that participants send to Charlie, this implies that
    \be \label{c-dimension-bound}
        n \sum_{k\in \K} Q_k \geq H(C_\K)_{\psi'}.
    \ee
    We can use this fact and the diagram of Figure \ref{fig:mpFQSW} to bound the rates $Q_i$.
    First we add $H(A_\Kbar)_{\psi} = H(A_\Kbar)_{\psi'}$ to both sides of equation (\ref{c-dimension-bound}) and
    obtain the inequality
    \be     \label{useful2}
        H(A_\Kbar)_{\psi}  +  n \sum_{k\in \K} Q_k
        \geq
        H(C_\K)_{\psi'} +H(A_\Kbar)_{\psi'} \geq H(C_\K A_\Kbar)_{\psi'}.
    \ee
    For each encoding operation, the input system $A_i$ is unitarily related to the outputs $C_iW_i$ so we can write
    \be   \label{thisistight}
        H(A_i)_\psi = H(W_iC_i)_{\psi'} \leq H(W_i)_{\psi'} + H(C_i)_{\psi'} \leq H(W_i)_{\psi'} + nQ_i,
    \ee
    where in the last inequality we have used the dimension bound $H(C_i) \leq nQ_i$.
    If we collect all the $Q_i$ terms from equations (\ref{useful2}) and (\ref{thisistight}), we obtain the inequalities
    \bea
        n\sum_{i\in\K} Q_i  &\geq&      H(C_\K A_\Kbar)_{\psi'} - H(A_\Kbar)_{\psi} \label{useful2rewrite} \\
        n\sum_{i\in\K} Q_i  &\geq&  \sum_{i\in\K}  H(A_i)_\psi  - \sum_{i\in\K} H(W_i)_{\psi'}. \label{sumoftight}
    \eea
    Now add equations (\ref{useful2rewrite}) and (\ref{sumoftight}) to get
    \begin{align}
    2 n\sum_{i\in\K} Q_i    &\geq^{\ \!\ \ \ }\quad \sum_{i\in\K}  H(A_i)_\psi  - \sum_{i\in\K} H(W_i)_{\psi'}
                                                +  H(C_\K A_\Kbar)_{\psi'} - H(A_\Kbar)_\psi \nonumber \\
                            &=^{(1)}\quad       \sum_{i\in\K}  H(A_i)_\psi  - \sum_{i\in\K} H(W_i)_{\psi'}
                                                +  H(W_\K R^n)_{\psi'} - H(R^n A_\K)_\psi \nonumber  \\
                            &\geq^{(2)}\quad    \sum_{i\in\K}  H(A_i)_\psi  - \sum_{i\in\K} H(W_i)_{\psi'}
                                                +  H(W_\K)_{\psi'}+ H(R^n)_{\psi'} - H(R^n A_\K)_\psi - f_1(\epsilon,n)
                                                \nonumber  \\
                            &=^{\ \ }\quad      \left[ \sum_{i\in\K} H(A_i) + H(R^n) - H(R^n A_\K) \right]_\psi
                                                +  H(W_\K)_{\psi'} - \sum_{i\in\K} H(W_i)_{\psi'} - f_1(\epsilon,n),
                                                \label{dirty-rate-bound}
    \end{align}
    where the equality $\!\!\phantom|^{(1)}$ comes about because the systems
    $\ket{\psi}^{A_\K A_\Kbar R^n}$ and $\ket{\psi'}^{C_\K W_\K A_\Kbar R^n}$ are pure.
    The inequality (\ref{wastereference}) from Step 1 was used in $\!\!\phantom|^{(2)}$.

    \ \\

    \myparagraph{Step 3: Squashed Entanglement}
    We would like to have a bound on the extra terms in equation \eqref{dirty-rate-bound} that does not depend on the
    encoding and decoding maps.
    We can accomplish this if we bound the waste terms $\sum_{i \in \K} H(W_i)_{\sigma}  - H(W_\K)_{\sigma}$
    by the squashed entanglement $2\Esq( A_{k_1};\cdots;A_{k_l} )_\psi$ of the input state
    for each $\K = \{k_1,k_2,\ldots,k_l\} \subseteq \{1,\ldots,m\}$ plus some small
    corrections.
    The proof requires a continuity statement analogous to \eqref{wastereference},
    namely that
    \be
        \big\vert H(W_i) - H(W_i|R)\big\vert \leq   f_2(\epsilon,n)     \label{conditioning-on-R}
    \ee
    where $f_2$ is some function such that $f_2(\epsilon,n)/n \rightarrow 0$ as  $\epsilon \rightarrow 0$.
    The proof is very similar to that of \eqref{wastereference} so we omit it.

    Furthermore, if we allow an arbitrary transformation  $\cE^{R \to E}$ to be applied to the $R$ system, we will
    obtain some general extension but the analog of equation (\ref{conditioning-on-R}) will remain true
    by the contractivity of the trace distance under CPTP maps. We can therefore write:
    \begin{align*} \label{wednesdayArgument}
        \sum_{i\in\K} H(W_i)_\psi  & -  H(W_\K)_\psi   \\
             &\leq     \sum_{i\in\K} H(W_i|E)  - H(W_\K|E)
                                                                     + [|\K|+1]f_2(\epsilon,n) \\
                    &=                I(W_{k_1};W_{k_2};\ldots;W_{k_l};E)
                                        - I(W_{k_1};E) - \!\!\!\!\sum_{i\in \{ \K  \setminus k_1\} } I(W_i;E) + f'_2(\epsilon,n)\\
                    &=^{(1)}      I(W_{k_1}E;W_{k_2};\ldots;W_{k_l})
                                        - \sum_{i\in \{ \K  \setminus k_1\} } I(W_i;E)  + f'_2(\epsilon,n)\\
                    &\leq^{(2)}   I(A_{k_1}E;W_{k_2};\ldots;W_{k_l})
                                        - \sum_{i\in \{ \K  \setminus k_1\} } I(W_i;E)  + f'_2(\epsilon,n)\\
                    &=^{(1)}      I(A_{k_1};W_{k_2};\ldots;W_{k_l},E) - I(A_{k_1};E)
                                        - \sum_{i\in \{ \K  \setminus k_1\} } I(W_i;E)  + f'_2(\epsilon,n)\\
                    &\leq^{(3)}       I(A_{k_1};A_{k_2};\ldots;A_{k_l};E) - \sum_{i\in\K} I(A_i;E)    + f'_2(\epsilon,n)\\
                    &\leq             I(A_{k_1};A_{k_2};\ldots;A_{k_l}|E) + f'_2(\epsilon,n),
    \end{align*}
    where we have used the shorthand $f'_2(\epsilon,n) = [|\K|+1]f_2(\epsilon,n)$ for brevity.
    Equations marked $\!\!\phantom|^{(1)}$ use Lemma \ref{mergingJ} and inequality $\!\!\phantom|^{(2)}$ comes
    about from Lemma \ref{cond-monotonicity}, the monotonicity of the multiparty information.
    Inequality $\!\!\phantom|^{(3)}$ is obtained when we repeat the steps for $k_2,\ldots,k_l$.
    The above result is true for any extension $E$ but we want to find the tightest possible lower bound for the
    rate region so we take the infimum over all possible extensions $E$ thus arriving at the definition of squashed
    entanglement.

    \ \\

    \noindent Putting together equation \eqref{dirty-rate-bound} from Step 2 and the bound from Step 3 we have
    \begin{align}
        2 n\sum_{i\in\K} Q_i    &\geq   \left[ \sum_{i\in\K} H(A_i) + H(R^n) - H(R^n A_\K) \right]_\psi
                                        - \left(
                                             \sum_{i\in\K} H(W_i)_{\psi'}-H(W_\K)_{\psi'}
                                          \right) - f_1(\epsilon,n) \nonumber \\
                                &\geq   \left[ \sum_{i\in\K} H(A_i) + H(R^n) - H(R^n A_\K) \right]_\psi
                                        -  2\Esq( A_{k_1};\cdots;A_{k_l} )_\psi
                                        - f_1(\epsilon,n) - f'_2(\epsilon,n). \nonumber
    \end{align}
    We can simplify the expression further by using the fact that $\ket{\psi} = \ket{\ph}^{\otimes n}$
    to obtain
    \begin{align}
        \sum_{k\in \K} Q_k
        &\geq   \frac{1}{2}\left[ \sum_{k\in \K} H(A_k) + H(R) - H(RA_{\K}) \right]_\ph
                        - \Esq({A_{k_1};A_{k_2};\ldots A_{k_l}})_\ph
                        - \frac{f_1(\epsilon,n)}{2n}
                        - \frac{f'_2(\epsilon,n)}{2n} \nonumber
    \end{align}
    where the we used explicitly the additivity of the entropy for tensor product states and the subadditivity of
    squashed entanglement demonstrated in Proposition \ref{subadditiveTensorProducts}.

    Theorem \ref{thm:THM2} follows from the above since $\epsilon > 0$ was arbitrary and
    $(f_1(\epsilon,n) + f'_2(\epsilon,n))/n \rightarrow 0$ as  $\epsilon \rightarrow 0$. \qed

\section{Discussion}

	We have shown how to build protocols for multiparty distributed compression out of the
	two-party fully quantum Slepian-Wolf protocol. The resulting achievable rates generalize those found
	in \cite{FQSW} for the two-party case and, for the most part, the arguments required are direct
	generalizations of those required for two parties. The most interesting divergence is to be found in section
	\ref{sec:THMIproof}, where we characterize the multiparty rates that can be achieved starting from sequential
	applications of the two-party protocol. These rates are most easily expressed in terms of the
	vertices of the associated polyhedron and we use a graph-theoretic argument to describe the
	polyhedron instead in terms of facet inequalities. We note that it is possible to give a direct proof \cite{HW06}
	that this multiparty rate region is achievable by mimicking the proof techniques of \cite{FQSW}, but in the spirit
	of that paper, we wanted to demonstrate that the more complicated multiparty compression protocols can
	themselves be built out of the simpler near-universal building block of two-party FQSW. Multiparty
	compression thus joins entanglement distillation, entanglement-assisted communication, channel simulation,
	communication over quantum broadcast channels, state redistribution \cite{DY06,O07} and
	many other protocols in the FQSW matriarchy.

	Multiparty FQSW can then itself be used as a building block for other
	multiparty protocols. For example, when classical communication between
	the senders and the receiver is free, combining multiparty FQSW with
	teleportation reproduces the multiparty state merging protocol of
	\cite{HOW05}. Running the protocol backwards in time yields an optimal
	reverse Shannon theorem for broadcast channels \cite{DH07}.

	The achievable rates we describe here, however, are only known to be optimal in the case when
	the source density operator is separable. Otherwise, we proved an outer bound on the rate region
	of the same form as the achievable rate region but with a correction term equal to the
	multiparty squashed entanglement of the source. In order to perform our analysis, we developed
	a number of basic properties of this quantity, notably that it is a convex, subadditive,
	continuous entanglement measure, facts that were established independently in \cite{multisquash}.
	
	We are thus left with some compelling open problems. The most obvious is, of course, to close the gap
	between our inner and
	outer bounds on distributed compression. While that may prove to be
	difficult, some interesting related questions may be easier. For example, can the gap between
	the rate region we have presented here and the true distributed compression region be characterized
	by an entanglement measure? That is, while we have used the multiparty squashed entanglement as a correction
	term, could it be that the true correction term is an entanglement monotone? Also, focusing on the
	squashed entanglement, the two-party version is known to be not just subadditive but additive. Is the same true
	of the multiparty version? \\
	\indent Note added in proof: The additivity of the multiparty squashed entanglement was recently proved in an 
	updated version of \cite{multisquash} which now includes W. Song in the author list.

\section*{Acknowledgments}
	We would like to thank Leonid Chindelevitch, Fr\'ed\'eric Dupuis,
	Michal Horodecki, Jonathan Oppenheim and Andreas Winter for helpful
	comments on the subjects of distributed compression and
	squashed entanglement.
	The authors gratefully acknowledge funding from the Alfred P. Sloan
	Foundation, the Canada Research
	Chairs program, CIFAR, FQRNT, MITACS and NSERC.


\bibliographystyle{unsrt}
\bibliography{esquash}

\end{document}